\renewcommand\qedsymbol{$\blacksquare$}
\DeclareMathOperator{\E}{\mathcal{E}}
\DeclareMathOperator*{\argmax}{arg\,max}
\DeclareMathOperator*{\argmin}{arg\,min}
\newtheorem{theorem}{Theorem}
\newtheorem{lemma}[theorem]{Lemma}
\begin{document}
\title{Secrecy Analyses of a Full-Duplex MIMOME Network}
\author
    {Reza Sohrabi,~\IEEEmembership{Student Member,~IEEE,}
    Qiping Zhu,~\IEEEmembership{Student Member,~IEEE,}
        Yingbo Hua,~\IEEEmembership{Fellow,~IEEE}\\
        \thanks{The authors are with
Department of Electrical and Computer Engineering,
University of California, Riverside, CA 92521, USA. Emails: rsohr001@ucr.edu,  qzhu005@ucr.edu and yhua@ece.ucr.edu. R. Sohrabi and Q. Zhu have both contributed to this work as the first author.
This work was supported in part by the Army Research Office under Grant Number W911NF-17-1-0581. The views and conclusions contained in this
document are those of the author and should not be interpreted as representing the official policies, either
expressed or implied, of the Army Research Office or the U.S. Government. The U.S. Government is
authorized to reproduce and distribute reprints for Government purposes notwithstanding any copyright
notation herein.
}}

\markboth{ }
{Shell \MakeLowercase{\textit{et al.}}: Bare Demo of IEEEtran.cls for Journals}

\maketitle

\begin{abstract}
    This paper presents secrecy analyses of a full-duplex MIMOME network which consists of two full-duplex multi-antenna users (Alice and Bob) and an arbitrarily located multi-antenna eavesdropper (Eve). The paper assumes that Eve's channel state information (CSI) is completely unknown to Alice and Bob except for a small radius of secured zone. The first part of this paper aims to optimize the powers of jamming noises from both users. To handle Eve's CSI being unknown to users, the focus is placed on Eve at the most harmful location, and the large matrix theory is applied to yield a hardened secrecy rate to work on. The performance gain of the power optimization in terms of maximum tolerable number of antennas on Eve is shown to be significant. The second part of this paper shows two analyses of anti-eavesdropping channel estimation (ANECE) that can better handle Eve with any number of antennas. One analysis assumes that Eve has a prior statistical knowledge of its CSI, which yields lower and upper bounds on secure degrees of freedom of the system  as functions of the number (N) of antennas on Eve and the size (K) of information packet. The second analysis assumes that Eve does not have any prior knowledge of its CSI but performs blind detection of information, which yields an approximate secrecy rate for the case of K being larger than N.
\end{abstract}
\begin{IEEEkeywords}
Physical layer security, secrecy rate, full-duplex radio, MIMOME, jamming, artificial noise, anti-eavesdropping channel estimation (ANECE).
\end{IEEEkeywords}

\section{Introduction}
Security of wireless networks is of paramount importance in today's world as  billions of people around the globe are dependent upon these networks for a myriad of activities for their businesses and lives. Among several key issues in wireless security  \cite{Zou-Zhu2016}, confidentiality is of particular interest to many researchers in recent years and is a focus of this paper.  For convenience, we will refer to confidentiality as security and vice versa.

The traditional  way to keep information confidential from unauthorized persons and/or devices is via cryptography at upper layers of the network, which include the asymmetric-key method (involving a pair of public key and private key) and the symmetric-key method (involving a secret key shared between two legitimate users). As the computing capabilities of modern computers (including quantum computers) rapidly improve, the asymmetric-key method is increasingly vulnerable as this method relies on computational complexity for security. In fact, the symmetric-key method is gaining more attraction in applications \cite{Koziol2018}.

However, the establishment of a secret key (or any secret) shared between two users is not trivial in itself. Even if a secret key was pre-installed in a pair of legitimate devices (during manufacturing or otherwise), the lifetime of the secret key in general  shortens each time the secret key is used for encryption. For many applications such as big data streaming, such secret key must be periodically renewed or changed. To enjoy the convenience of mobility, it is highly desirable for users to be able to establish a secret key in a wireless fashion.

Establishing a secret key or directly transmitting secret information  between users in a wireless fashion (without a pre-existing shared secret)  is the essence of physical layer security \cite{Bloch2011}. There are two complementary approaches in physical layer security: secret-key generation and secret information transmission. The former requires users to use their (correlated) observations and an unlimited public channel to establish a secret key, and the latter requires one user to transmit secret information directly to the other. This paper is concerned with the latter, i.e., transmission of secret information (such as secret key) between users without any prior digital secret.

Specifically, this paper is focused on a network as illustrated in Fig. \ref{fig:config1} where one legitimate user (Alice) wants to send a secret key to another legitimate user (Bob) subject to eavesdropping by an eavesdropper (Eve) anywhere. Each of the two users/devices is allowed to have multiple antennas, and both Alice and Bob are capable of full-duplex operations. Following a similar naming in the literature such as \cite{5605343}, we call the above setup a full-duplex MIMOME network where MIMOME refers to the multi-input multi-output (MIMO) channel between Alice and Bob and the multi-antenna Eve.

The MIMOME related works in the literature include: \cite{5961840, khisti2007gaussian, 4626059, 6848758, 4529293, liu2009note, 5605343, 5940242, 7219473, zhou2010secure, cepheli2017joint, masood2017minorization} where the channel state information (CSI) at Eve is assumed to be known not only to Eve itself\footnote{All entities are treated as ``gender neutral''.} but also to Alice and Bob; \cite{1558439, 5550916, 5699835, 7328736} where a partial knowledge of Eve's CSI is assumed to be available to Alice and Bob and an averaged secrecy or secrecy outage was considered;  \cite{negi2005secret,4543070,liao2011qos, zhou2010secure, swindlehurst2009fixed, 5161804, liu2015artificial} where  artificial noise is embedded in the signal from Alice; and \cite{zheng2013improving,zhou2014application,7339654,hua2017fundamental, chen2017fast, hua2018advanced} where Bob is treated as a full-duplex node capable to receive the signal from Alice while transmitting jamming noise.

From the literature, the idea of using jamming noise from Alice or Bob appears important.
Inspired by that, this paper will first consider a case where both Alice and Bob send jamming noises while Alice transmits secret information to Bob. We will explore how to optimize the jamming powers from Alice and Bob. In \cite{zhou2014application}, jamming from both users was also considered. But here for power optimization we include the effect of the residual self-interference of full-duplex radio. There are other differences in the problem formulation and objectives.   We assume that Eve's CSI is completely unknown to Alice or Bob except for a radius of secured zone free of Eve around Alice. A similar idea was also applied in \cite{liu2016ergodic} but in a different problem setting. We will focus on Eve that is located at the most harmful position. Furthermore, to handle the small-scale fading at Eve, we apply the large matrix theory to obtain a closed-form expression of a secrecy rate, which makes the power optimization tractable. Unlike \cite{liu2015artificial} where large matrix theory was also applied, we consider an arbitrary large-scale-fading at Eve among other major differences.
With the optimized powers, we reveal a significant performance gain in terms of the maximum tolerable number of antennas on Eve to maintain a positive secrecy. We will also show that as the number of antennas on Eve increases, the impact of the jamming noise from either Alice or Bob on secrecy vanishes. This contribution extends a previous understanding of single-antenna users shown in \cite{hua2018advanced}.

Later in this paper, we will analyze a two-phase scheme for secret information transmission proposed in \cite{hua2018advanced}. In the first phase, an anti-eavesdropping channel estimation (ANECE) method is applied which allows users to find their CSI but suppresses Eve's ability to obtain its CSI. In the second phase, secret information is transmitted between Alice and Bob while Eve has little or no knowledge of its CSI. We show two analyses based on two different assumptions. The first analysis assumes that Eve has a prior statistical knowledge of its CSI. With every node knowing a statistical model of CSI anywhere, we use mutual information to analyze the secret rate of the network, from which lower and upper bounds on the secure degrees of freedom are derived. These bounds are simple functions of the number of antennas on Eve. The second analysis assumes that Eve does not have any prior knowledge of its CSI. Due to ANECE in phase 1, Eve is blind to its CSI. But in phase 2, Eve performs blind detection of the information from Alice. We analyze the performance of the blind detection, from which an approximate secret rate is derived and numerically illustrated. Both of these analyses are important contributions useful for a better understanding of ANECE.

\textit{Notation}: Matrices and column vectors are denoted by upper and lowercase boldface letters. The trace, Hermitian transpose, column-wise vectorization, $(i,j)$th element, and complex conjugate of a matrix $\mathbf{A}$ are denoted by $\mathrm{Tr}\left (\mathbf{A}\right)$, $\mathbf{A}^H$, $\textrm{vec}\left (\mathbf{A}\right)$, $\mathbf{A}_{i,j}$, and $\mathbf{A}^*$, respectively.
For a matrix $\mathbf{X}$ and its vectorized version $\mathbf{x}$, $\textrm{ivec}\left (\mathbf{x}\right)$ is the inverse operation of $\mathbf{x}=\textrm{vec}\left (\mathbf{X}\right)$. A diagonal matrix with elements of $\mathbf{x}$ on its diagonal is $\mathrm{diag}\left (\mathbf{x}^T\right)$. Expectation with respect to a random variable $x$ is denoted by $\mathcal{E}_x\left [\cdot\right ]$. Let the random
variables ${X_n}$ and $X$ be defined on the same probability space, and we write $X_n \overset{a.s.}{\rightarrow}X$ if $X_n$ converges to $X$ almost surely as $n \rightarrow \infty$. The identity matrix of the size $n\times n$ is $\mathbf{I}_n$ (or $\mathbf{I}$ with $n$ implied in the context), and $\mathbf{1}_n $ is a row vector of length $n$ of all ones. A circularly
symmetric complex Gaussian random variable $x$ with variance $\sigma^2$ is denoted as $x  \sim \mathcal{CN}\left (0, \sigma^2\right)$. The mutual information between random variables $x$ and $y$ is $I\left (x;y\right)$, and  $h(x)$ denotes the differential entropy of $x$. Logarithm in base 2 is denoted by $\log\left (\cdot\right)$,
and $\left (\cdot\right)^+ \triangleq \max\left (0,\cdot\right)$.

\section{Optimization of Jamming Powers and Effects of Eve's Antennas}\label{sec:opt}
\subsection{System Model}
\label{sec:model}
Our network setup is shown in Fig. \ref{fig:config1},
\begin{figure}[bt!]
    \centering
    \includegraphics[width=.8\linewidth]{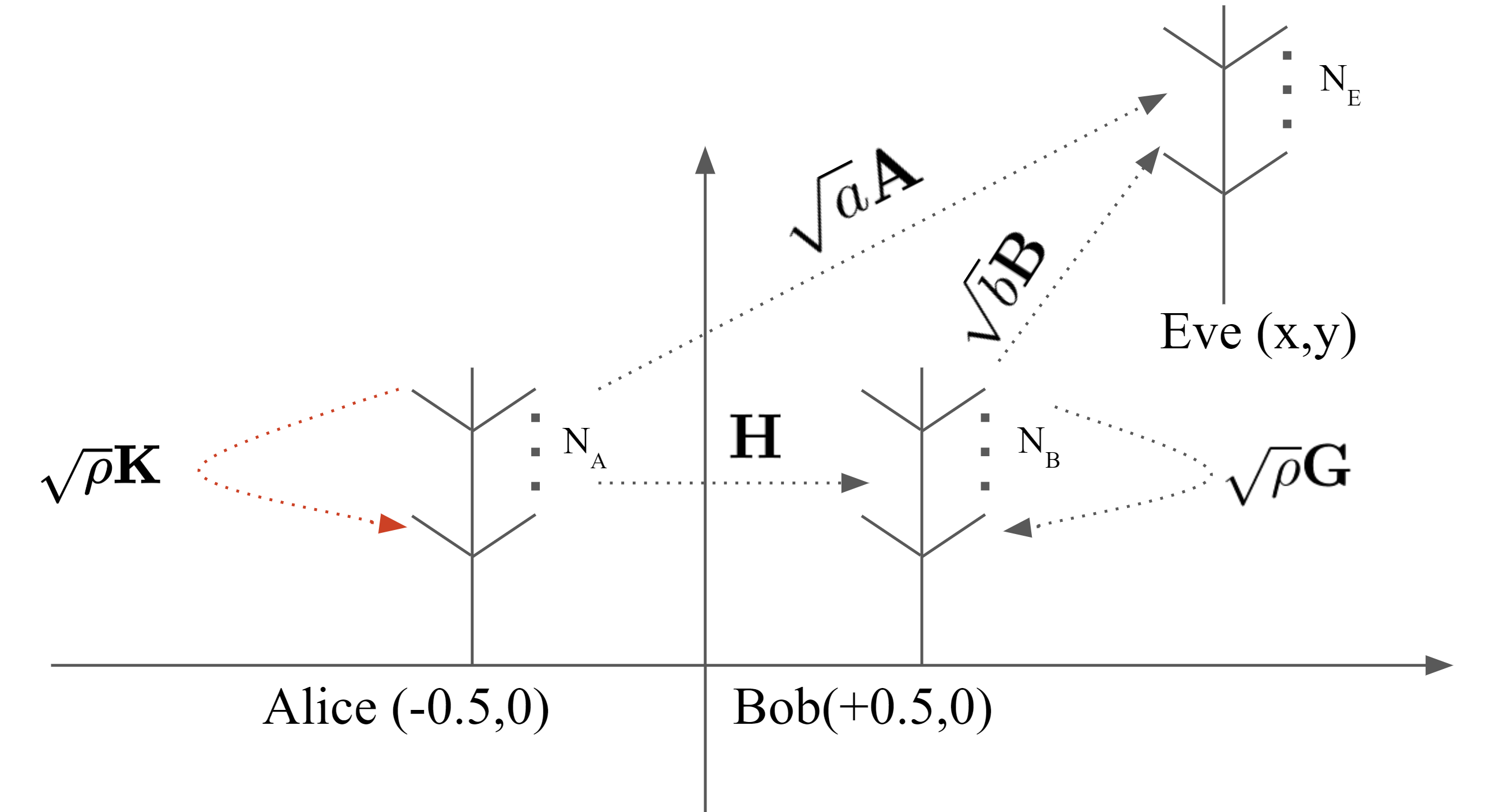}
    \caption{A Full-Duplex MIMOME network}
    \label{fig:config1}
\end{figure}
where Alice   (with $N_A$ antennas) intends to send secret information over a wireless channel to Bob   (with $N_B$ antennas) in the presence of
 possibly many passive Eves   (of $N_E$ antennas each) that may collude with each other at the network layer but not at the physical layer. We will focus on the most harmful Eve.
 Physical layer colluding among distributed Eves to form a large virtual antenna array is highly difficult in practice. But if a virtual antenna array from colluding Eves is likely in some applications, we could treat these colluding Eves as a single mega Eve with a large number of antennas.

The system parameters are normalized in a similar way as in \cite{hua2017fundamental}. In particular, the large-scale-fading factor from Alice to Eve is modeled as (when a model is needed):
$
    a={d_A^{-\alpha}}={\left (\left (x+0.5\right)^2+y^2\right)^{-\alpha/2}},
$
and that from Bob to Eve is
$
    b= {d_B^{-\alpha}}={\left (\left (x-0.5\right)^2+y^2\right)^{-\alpha/2}}
$
where $\alpha$ is the path-loss exponent. We assume that no Eve is closer to Alice than a radius $\Delta$, i.e., $d_A \geq \Delta$. The normalized large-scale-fading factor of the residual self-interference at both Alice and Bob is denoted by $\rho$. (In all simulations, $\rho$ is considered to be $0.1\%$.) The small-scale-fading channel matrix from Alice to Eve is denoted by $\mathbf{A}$, that from Bob to Eve is $\mathbf{B}$, and that of the residual self-interference at Bob and Alice are $\mathbf{G}$ and $\mathbf{K}$, respectively\footnote{Up to Section \ref{sec:anece}, Alice is only a transmitter, and hence it does not utilize its full-duplex capability.}. The channel matrix  from Alice to Bob is denoted by $\mathbf{H}$, and its SVD is denoted by
\begin{equation}\label{eq:H}
    \mathbf{H}= \mathbf{U}\boldsymbol{\Sigma} \mathbf{V}^H,
\end{equation}
where $\mathbf{U}$ and $\mathbf{V}$ are unitary matrices, and $\boldsymbol{\Sigma}$ is the $N_B\times N_A$ diagonal matrix that contains the singular values of $\mathbf{H}$ (i.e., $\sigma_i$, $i=1,\cdots,N_B$) in descending order  assuming $N_A\geq N_B$.
All the elements in all channel matrices are modeled as i.i.d. circularly symmetric complex Gaussian random variables with zero mean and unit variance.

In this section, we assume that Alice and Bob have the knowledge of $\mathbf{H}$ but not of $\mathbf{A}$ and $\mathbf{B}$, and Eve has the knowledge of all these matrices.

Alice sends the following signal containing $r\leq N_B\leq N_A$ streams of secret information mixed with artificial noise:
\begin{equation}
\label{eq:x_A}
    \mathbf{x}_A(k)= \mathbf{V}_1\mathbf{s}(k)+\mathbf{V}_2 \mathbf{w}_A(k),
\end{equation}
where $k$ is the index of time slot, $\mathbf{V}_1$ is the first $r$ columns of $\mathbf{V}$, $\mathbf{V}_2$ is  the last $N_A-r$ columns of $\mathbf{V}$, $\mathbf{s}(k)$ is Alice's information vector with the covariance matrix $\mathbf{Q}_r$ and $\mathrm{Tr}\left (\mathbf{Q}_r\right)= P_s$, and $\mathbf{w}_A(k)$ is an $\left (N_A-r\right)\times 1$ artificial noise vector with distribution $\mathcal{CN}\left (\mathbf{0},{\frac{P_n}{N_A-r}}\mathbf{I}\right)$. Here, $P_s+P_n = P_A \leq P_A^{max}$.

While Bob receives information from Alice, it also sends a jamming noise:
\begin{equation}
\label{eq:x_B}
    \mathbf{x}_B(k)=  \mathbf{w}_B(k),
\end{equation}
where $ \mathbf{w}_B(k)$ is an $N_B\times 1$ artificial noise vector with distribution $\mathcal{CN}\left (\mathbf{0},{\frac{{P}_B}{N_B}}\mathbf{I}\right)$.

Note that both ${P}_A$ and $ {P}_B$ are  normalized powers with respect to the path loss from Alice to Bob, and with respect to the power of the background noise. So, without loss of generality, we let the power of the background noise be one.

With jamming from both Alice and Bob, the signals received by Bob and Eve are respectively:
\begin{equation}
\label{eq:yb}
    \mathbf{y}_B(k)= \mathbf{H} \mathbf{V}_1\mathbf{s}(k)+ \mathbf{H} \mathbf{V}_2 \mathbf{w}_A(k) +\sqrt{\rho} \mathbf{G} \bar{\mathbf{w}}_B(k)+{\mathbf{n}}_B(k),
\end{equation}
\begin{equation}
\begin{split}
\label{eq:ye}
    \mathbf{y}_E(k)=& \sqrt{a}\mathbf{A}_1\mathbf{s}(k)+\sqrt{a}\mathbf{A}_2 \mathbf{w}_A(k)+ \sqrt{b} \mathbf{B} \mathbf{w}_B(k)+\mathbf{n}_E(k),
\end{split}
\end{equation}
where $\left [\mathbf{A}_1, \mathbf{A}_2\right ]=\left [\mathbf{A}\mathbf{V}_1, \mathbf{A}\mathbf{V}_2\right ]=\mathbf{AV}$. Since $\mathbf{A}_1$ and $\mathbf{A}_2$ are linear functions of the Gaussian matrix $\mathbf{A}$, they remain Gaussian. Because of the unitary nature of $\mathbf{V}$, $\mathbf{A}_1$ and $\mathbf{A}_2$ are independent of each other, and all elements in them are i.i.d. Gaussian of zero mean and unit variance. The noise vectors $\mathbf{n}_B$ and $\mathbf{n}_E$ are distributed as $\mathcal{CN}\left (\mathbf{0},\mathbf{I}\right)$. Also note that $\sqrt{\rho} \mathbf{G} \bar{\mathbf{w}}_B(k)$ is the residual self-interference originally caused by ${\mathbf{w}}_B(k)$ but is independent of ${\mathbf{w}}_B(k)$ \cite{hua2018advanced}.

If CSI anywhere is known everywhere,  the achievable secrecy rate of the above system is  known \cite{Csiszar1978} to be
\begin{equation}\label{}
  R_S=(R_{AB}-R_{AE})^+
\end{equation}
where $R_{AB}$ is the rate from Alice to Bob and $R_{AE}$ is the rate from Alice to Eve. Namely,
\begin{align}
\label{eq:R_AB}
 R_{AB} &= \log|\mathbf{I}+\mathbf{C}_B^{-1}\mathbf{H}\mathbf{V}_1\mathbf{Q}_r\mathbf{V}_1^H\mathbf{H}^H|,
\end{align}
\begin{align}
\label{eq:R_ae_orig}
     R_{AE} &= \log |\mathbf{I}+a \mathbf{C}_E^{-1}\mathbf{A}_1\mathbf{Q}_r\mathbf{A}_1^H|,
\end{align}
where
\begin{align}
\label{eq:R_B2}
    \mathbf{C}_B=\mathbf{I}+\frac{P_n}{N_A-r} \mathbf{H}\mathbf{V}_2\mathbf{V}_2^H\mathbf{H}^H+ \frac{\rho {P}_B}{N_B} \mathbf{G}\mathbf{G}^H,
\end{align}
\begin{align}
    \mathbf{C}_E=\mathbf{I}+{\frac{aP_n}{N_A-r}}\mathbf{A}_2\mathbf{A}_2^H+ {\frac{b {P}_B}{N_B}} \mathbf{B}\mathbf{B}^H.
\end{align}
Note that since $\mathbf{H}\mathbf{V}_1=\mathbf{U}_1\boldsymbol{\Sigma}_1$ and $\mathbf{H}\mathbf{V}_2=\mathbf{U}_2\boldsymbol{\Sigma}_2$ are orthogonal to each other where $\mathbf{U}_1$ and $\mathbf{U}_2$ are the partitions of $\mathbf{U}$ similar to those of $\mathbf{V}$, and $\boldsymbol{\Sigma}_1$ and $\boldsymbol{\Sigma}_2$ are the corresponding diagonal partitions of $\boldsymbol{\Sigma}$, a sufficient statistics of $\mathbf{s}(k)$ at Bob is $\mathbf{U}_1^H\mathbf{y}_B(k)=\boldsymbol{\Sigma}_1\mathbf{s}(k)+
\sqrt{\rho}\mathbf{U}_1\mathbf{G}\mathbf{\bar w}_B(k)+\mathbf{U}_1\mathbf{n}_B(k)$ which shows that the artificial noise from Alice does not affect Bob. Consequently, an equivalent form of $R_{AB}$ is
\begin{align}
\label{eq:R_AB_2}
 R_{AB} &= \log|\mathbf{I}_r+\mathbf{C}_{B,1}^{-1}\boldsymbol{\Sigma}_1\mathbf{Q}_r\boldsymbol{\Sigma}_1|,
\end{align}
where
\begin{align}
\label{eq:R_B2_2}
    \mathbf{C}_{B,1}=\mathbf{I}_r+ \frac{\rho {P}_B}{N_B} \mathbf{U}_1^H\mathbf{G}\mathbf{G}^H\mathbf{U}_1.
\end{align}
However, the expression shown in \eqref{eq:R_AB} is needed later due to its direct connection to $\mathbf{H}$ and $\mathbf{G}$.

If we optimize $P_n$, $P_B$ and $\mathbf{Q}_r$ to maximize the above $R_S$, the solution would be a function of Eve's CSI. This does not appear to be useful in practice.

If Eve's CSI is unknown to Alice but the statistics of Eve's CSI is known to Alice, then we can consider the ergodic secrecy:
\begin{equation}\label{}
  \bar R_S=(\mathcal{E}_{\mathbf{H},\mathbf{G}}[R_{AB}]-\mathcal{E}_{\mathbf{A},\mathbf{B}}[R_{AE}])^+
\end{equation}
which is achievable via coding over many CSI coherence periods. Closed form expression of each of the two terms in the above can be obtained using ideas in \cite{chiani2010mimo} and \cite{1203954}. But if we use $\bar R_S$ as objective to optimize $P_n$, $P_B$ and $\mathbf{Q}_r$, the solution would be independent of the CSI between Alice and Bob, and such a solution is not very useful either.

Because of the above reasons, we will consider the worst case of $R_{AE}$. The worst case is such that Eve is located at the most harmful location and has a large number of antennas.

It is shown in our earlier work \cite{sohrabi_secrecy} that the most harmful position of Eve is at $x^*=-0.5-\Delta$ and $y^*=0$. From now on, we will refer to $a$ and $b$ as corresponding to the position $\left (x^*,y^*\right)$. In all simulations, we will use $\Delta=0.1$ unless mentioned otherwise.

Given a large number of antennas at Eve, we can use large matrix theory to obtain a closed-form expression of  $R_{AE}$ that is no longer dependent on instantaneous CSI at Eve, which is shown next.
We can rewrite \eqref{eq:R_ae_orig} as follows:
  \begin{equation}\label{eq:R_ae_orig_b}
    R_{AE} =
    \log |\mathbf{I}+ \mathbf{J}_3\bar{\mathbf{\Theta}}_3\mathbf{J}_3^H|-\log |\mathbf{I}+ \mathbf{J}_4\bar{\mathbf{\Theta}}_4\mathbf{J}_4^H|
\end{equation}
where $\mathbf{J}_3=\frac{1}{\sqrt{N_E}}[\mathbf{A}_1,\mathbf{A}_2,\mathbf{B}]$, $\mathbf{J}_4=\frac{1}{\sqrt{N_E}}[\mathbf{A}_2,\mathbf{B}]$, and
\begin{equation}
    \bar{\mathbf{\Theta}}_3 = N_E\mathrm{diag}\left [ a\mathbf{q}_r^T,\frac{a{P}_n}{N_A-r}\mathbf{1}_{N_A-r},\frac{b P_B}{N_B} \mathbf{1}_{N_B}\right ],
\end{equation}
\begin{equation}
    \bar{\mathbf{\Theta}}_4 = N_E\mathrm{diag}\left [ \frac{a{P}_n}{N_A-r}\mathbf{1}_{N_A-r},\frac{b P_B}{N_B} \mathbf{1}_{N_B}\right ]
\end{equation}
where $\mathbf{q}_r$ is the vector containing the diagonal elements of the diagonal matrix $\mathbf{Q}_r$ (assuming that Alice does not know Bob's self-interference channel).
Note that the $\mathbf{J}$ matrices consist of i.i.d. random variables and the $\bar{\boldsymbol{\Theta}}$ matrices are diagonal. (The numbering of 3 and 4 used here is because of the numbering later.)

\begin{lemma}
\label{sec:lemma1}
Let $\mathbf{J}$ be an $N\times K$ matrix whose entries are i.i.d. complex random variables with variance $\frac{1}{N}$, and $\mathbf{\Theta}$ be a diagonal deterministic matrix. Based on Theorem   (2.39) of \cite{tulino2004random}, as $N, K\rightarrow \infty$ with $\frac{K}{N}\rightarrow\beta$, we have
\begin{equation}
\label{eq:asymptotic_rate}
    \frac{1}{N}\log |\mathbf{I}+ \mathbf{J}\mathbf{\Theta}\mathbf{J}^H|\overset{a.s.}{\rightarrow}\Omega\left (\beta, \mathbf{\Theta} ,\eta\right),
\end{equation}
where
\begin{align}
\label{eq:omega}
    \Omega\left (\beta, \mathbf{\Theta}, \eta\right) \triangleq \beta  \mathcal {V}_{\mathbf{\Theta} }\left (\eta \right) - \log \left (\eta\right) +\left (\eta -1\right)\log\left (e\right),
\end{align}
\begin{align}
\label{eq:shannon}
    \mathcal {V}_{\mathbf{\Theta} }\left (\eta \right)\triangleq \frac{1}{L_{\mathbf{\Theta}}}
    \sum_{j=1}^{L_{\mathbf{\Theta}}} \log\left (1+\eta\mathbf{\Theta}_{j,j}\right).
\end{align}
Here, $\mathbf{\Theta}_{j,j}$ is the $j$th diagonal element of the diagonal matrix $\mathbf{\Theta}$, $L_{\mathbf{\Theta}}$ is the number of diagonal elements of ${\mathbf{\Theta}}$, and $\eta>0$ is the solution to the equation
\begin{align}
\label{eq:eq_eta}
     {1-\eta}=\frac{\beta\eta}{L_{\mathbf{\Theta}}}
    \sum_{j=1}^{L_{\mathbf{\Theta}}} \frac{\mathbf{\Theta}_{j,j}}{1+\eta\mathbf{\Theta}_{j,j}}.
\end{align}
\end{lemma}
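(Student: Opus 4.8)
The plan is to obtain the lemma directly from Theorem 2.39 of \cite{tulino2004random}, the only substantive work being to cast $\frac{1}{N}\log|\mathbf{I}+\mathbf{J}\mathbf{\Theta}\mathbf{J}^H|$ in the exact form treated there and to upgrade weak convergence of the spectrum to almost-sure convergence of the functional. First I would note that $\mathbf{J}\mathbf{\Theta}\mathbf{J}^H$ is a separable sample-covariance-type matrix: with $\mathbf{J}$ having i.i.d.\ entries of variance $1/N$ and the deterministic diagonal $\mathbf{\Theta}\succeq\mathbf{0}$ serving as the eigenvalue profile, it is exactly the matrix $\mathbf{H}\mathbf{T}\mathbf{H}^H$ of \cite{tulino2004random} with the signal-to-noise parameter set to $1$ (the scale being absorbed into $\mathbf{\Theta}$). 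Writing $\lambda_1,\dots,\lambda_N$ for its eigenvalues and $F^N$ for its empirical spectral distribution, one has the exact identity $\frac{1}{N}\log|\mathbf{I}+\mathbf{J}\mathbf{\Theta}\mathbf{J}^H|=\frac{1}{N}\sum_{i=1}^N\log(1+\lambda_i)=\int_0^\infty \log(1+x)\,dF^N(x)$, that is, the Shannon transform of $F^N$ at unit SNR.

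Next I would bring in the spectral limit. Under the i.i.d.\ assumption (finite variance, plus a finite fourth moment or a truncation step for the almost-sure statement), $F^N\overset{a.s.}{\rightarrow}F$ weakly, where $F$ is the deterministic limiting law whose Stieltjes transform solves the Marchenko--Pastur fixed-point equation associated with the profile $\mathbf{\Theta}$ and the aspect ratio $\beta$; moreover the largest eigenvalue of $\mathbf{J}\mathbf{\Theta}\mathbf{J}^H$ is almost surely bounded (roughly by $(1+\sqrt{\beta})^2\|\mathbf{\Theta}\|$), so the supports of all $F^N$ eventually lie in one compact interval. On that interval $x\mapsto\log(1+x)$ is bounded and continuous, and weak convergence together with the common support bound then yields $\int\log(1+x)\,dF^N(x)\overset{a.s.}{\rightarrow}\int\log(1+x)\,dF(x)$, which is the claimed almost-sure convergence; the limit is what is denoted $\Omega(\beta,\mathbf{\Theta},\eta)$.

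Then I would read off the closed form. Theorem 2.39 of \cite{tulino2004random} expresses the Shannon transform of $\mathbf{H}\mathbf{T}\mathbf{H}^H$ as $\beta$ times the Shannon transform of the profile $\mathbf{T}$ evaluated at an auxiliary point, plus the correction $-\log\eta+(\eta-1)\log e$. Specializing to the diagonal profile $\mathbf{T}=\mathbf{\Theta}$ turns the profile's Shannon transform into the average $\mathcal{V}_{\mathbf{\Theta}}(\eta)$ of \eqref{eq:shannon}, and the coupling condition that pins down the auxiliary point, after eliminating the secondary auxiliary variable, becomes exactly \eqref{eq:eq_eta}; combining these gives \eqref{eq:omega}. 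I would also record that \eqref{eq:eq_eta} has a unique root $\eta\in(0,1)$: its left side $1-\eta$ decreases from $1$ to $-\infty$ on $(0,\infty)$ while, since $\mathbf{\Theta}\succeq\mathbf{0}$, its right side increases from $0$, so the two graphs cross exactly once, which makes $\Omega(\beta,\mathbf{\Theta},\eta)$ well defined.

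The real obstacle here is analytic rather than algebraic: (i) justifying the interchange of limit and integration for the unbounded integrand $\log(1+x)$, which is precisely what forces the operator-norm/tail control on $\mathbf{J}\mathbf{\Theta}\mathbf{J}^H$ and the accompanying moment condition (or a truncation argument); and (ii) checking that the hypotheses behind Theorem 2.39 actually hold in the regime where the lemma is used later --- in particular that the empirical distribution of the diagonal entries of $\mathbf{\Theta}$ converges with uniformly bounded support along the sequence, which is the case because there $\mathbf{\Theta}$ is assembled from the fixed quantities $a$, $b$, $P_n$, $P_B$ and $\mathbf{q}_r$ so that its distinct diagonal values remain bounded. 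Everything beyond these two points is a notational translation of the cited theorem.
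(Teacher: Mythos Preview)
Your argument is correct and follows the same route as the paper's own proof: both identify $\frac{1}{N}\log|\mathbf{I}+\mathbf{J}\mathbf{\Theta}\mathbf{J}^H|$ with the Shannon transform of the empirical spectral distribution, then invoke Theorem~2.39 of \cite{tulino2004random} at unit SNR with the diagonal profile $\mathbf{\Theta}$ to read off \eqref{eq:omega} and the fixed-point equation \eqref{eq:eq_eta}. You are more careful than the paper about the analytic points --- the operator-norm bound needed to pass from weak spectral convergence to almost-sure convergence of the unbounded functional $\log(1+x)$, and the monotonicity argument for uniqueness of $\eta\in(0,1)$ --- which the paper either omits or mentions only in passing in the main text; these additions are sound and do not alter the strategy.
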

\begin{proof}
The proof is given in Appendix \ref{sec:append1}.
\end{proof}

Using the lemma, it follows from \eqref{eq:R_ae_orig_b} that for large $N_E$,
\begin{eqnarray}\label{eq:RAE_b}
  &&R_{AE} \nonumber\\
  &\simeq&\sum_{j=1}^{L_{\bar{\mathbf{\Theta}}_3}} \log\left (1+\bar{\eta}_3\left (\bar{\mathbf{\Theta}}_3\right)_{j,j}\right) -\sum_{j=1}^{L_{\bar{\mathbf{\Theta}}_4}} \log\left (1+\bar{\eta}_4\left (\bar{\mathbf{\Theta}}_4\right)_{j,j}\right)\nonumber\\
  &&+ N_E\log \left (\frac{\bar{\eta}_4}{\bar{\eta}_3}\right) +N_E\left (\bar{\eta}_3 -\bar{\eta}_4\right)\log\left (e\right)\nonumber\\
  &\triangleq& \mathcal{R}_{AE}
\end{eqnarray}
where for $i=3,4$, $\bar \eta_i$ is the solution of $\eta$ to \eqref{eq:eq_eta} with $\beta=\bar\beta_i$ and $\bar{\mathbf{\Theta}}=\bar{\mathbf{\Theta}}_i$. Here, $\bar\beta_3=\frac{N_A+N_B}{N_E}$ and $\bar \beta_4=\frac{N_A-r+N_B}{N_E}$. Note that the right side of \eqref{eq:eq_eta} is a monotonic function of $\eta\geq 0$ and hence a unique solution of $0\leq \eta \leq 1$ can be easily found by bisection search.

\begin{figure}[bt!]
    \centering
    \includegraphics[width=1\linewidth]{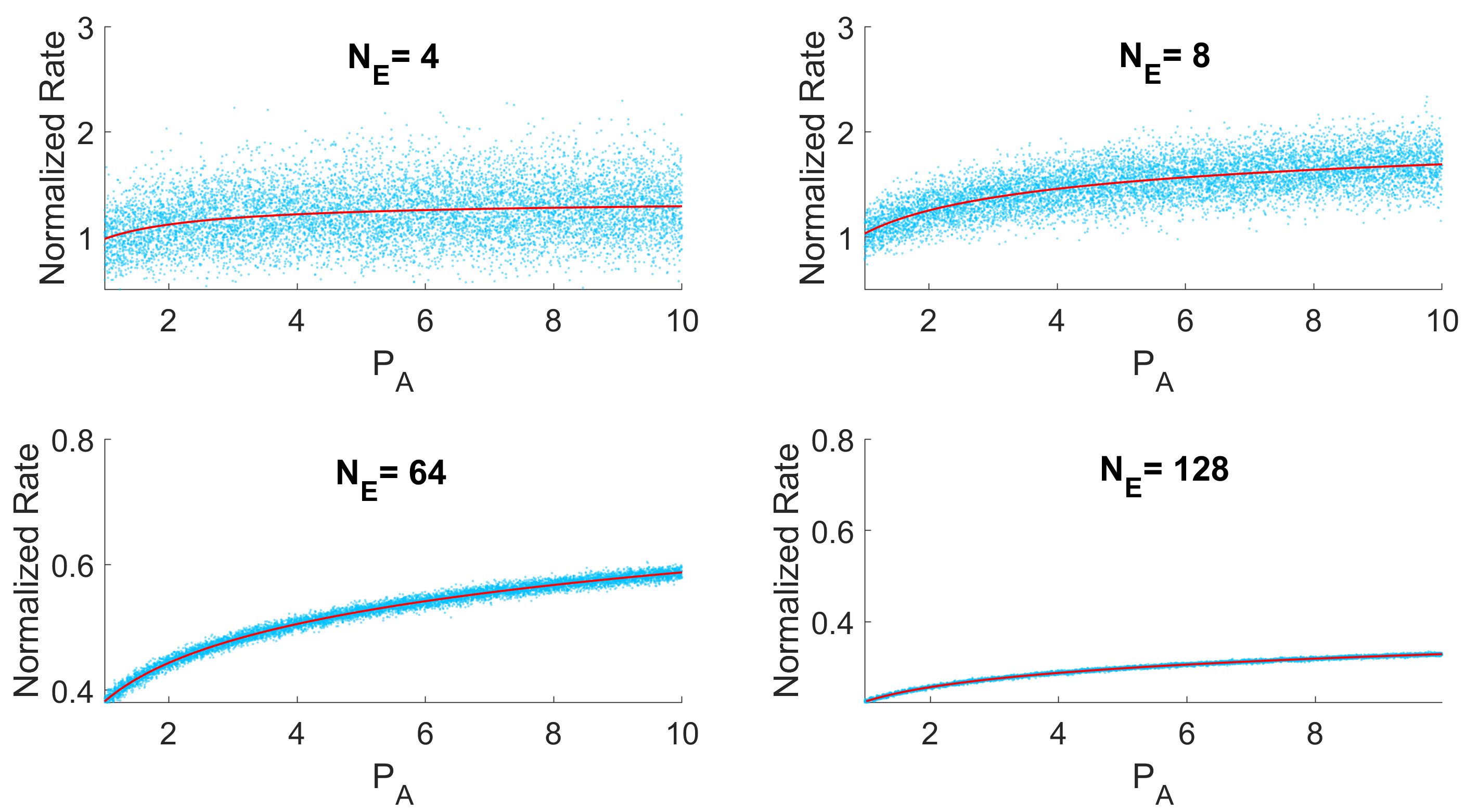}
    \caption{Comparison of the exact random realizations of $\frac{R_{AE}}{N_E}$ with its asymptotic result (the red/solid curve).}
    \label{fig:comp_asympt2}
\end{figure}

It is useful to note that the asymptotic form $\mathcal{R}_{AE}$ is a good approximation of the exact form $R_{AE}$ as long as $N_E$ is large regardless of $N_A$ and $N_B$.
Shown in Fig. \ref{fig:comp_asympt2} is a comparison of the exact random realizations of $\frac{R_{AE}}{N_E}$ from \eqref{eq:R_ae_orig} with its asymptotic result $\frac{\mathcal{R}_{AE}}{N_E}$ from \eqref{eq:RAE_b} where $N_A=2N_B=8$, $r=N_B$, $\mathbf{Q}_r=\frac{P_s}{N_B}\mathbf{I}$, $P_s=P_n=\frac{P_A}{2}$. Note that 100 realizations of $\frac{R_{AE}}{N_E}$ corresponding to 100 random realizations of Eve's CSI for each value of $P_A$ are shown. We see that as $N_E$ increases (beyond 8), $\frac{\mathcal{R}_{AE}}{N_E}$ becomes a good approximation of $\frac{R_{AE}}{N_E}$.

\subsection{Power Optimization and Maximum Tolerable Number of Antennas on Eve}
\label{sec:opt_inst_sec}

With $(R_{AB}-\mathcal{R}_{AE})^+$ as the objective function, we can now develop an optimization algorithm to  optimize the power distribution. Note that since the CSI required for $R_{AB}$ is known to Alice and Bob, we do not need to replace $R_{AB}$ by its asymptotic form.

Specifically, we can use this cost function $g(\mathbf{x}_r)\triangleq \mathcal{R}_{AE}-R_{AB}$ where
$\mathbf{x}_r=[\mathbf{q}_r^T, P_n, P_B]^T$. Then, we need to solve the following problem:
\begin{equation}
\label{eq:opt_inst}
\begin{aligned}
& \min_r\min_{\mathbf{x}_r}
& &  g\left (\mathbf{x}_r\right)\\
& \quad \quad \quad \text{s.t.} & &
   \sum_{i = 1}^r \mathbf{q}_r(i)+P_n \leq P_A^{max} \\
& & &  \mathbf{q}_r(i) \geq 0,\ \forall i=1,\dots, r\\
&&& P_n\geq 0\\
& & &  0\leq P_B \leq P_B^{max}.
\end{aligned}
\end{equation}
Here the optimization of $r$ is simple, which can be done via sequential search. For a given $r$, the above problem is not convex.
Although the constraints are convex, the cost $g\left (\mathbf{x}_r\right)$ is not. To see this, let us rewrite this function as follows:
\begin{align}
    &g\left (\mathbf{x}_r\right) =-\log|\mathbf{C}_B+\mathbf{H}\mathbf{V}_1\mathbf{Q}_r\mathbf{V}_1^H\mathbf{H}^H|+\log |\mathbf{C}_B|\notag\\
    & +\sum_{j=1}^{L_{\bar{\mathbf{\Theta}}_3}} \log\left (1+\bar{\eta}_3\left (\bar{\mathbf{\Theta}}_3\right)_{j,j}\right) -\sum_{j=1}^{L_{\bar{\mathbf{\Theta}}_4}} \log\left (1+\bar{\eta}_4\left (\bar{\mathbf{\Theta}}_4\right)_{j,j}\right)\notag\\
    &+ N_E\log \left (\frac{\bar{\eta}_4}{\bar{\eta}_3}\right) +N_E\left (\bar{\eta}_3 -\bar{\eta}_4\right)\log\left (e\right).\label{eq:21}
\end{align}
The non-convex parts of $g\left (\mathbf{x}_r\right)$ are $\log |\mathbf{C}_B|$ and
 $\sum_{j=1}^{L_{\bar{\mathbf{\Theta}}_3}} \log\left (1+\bar{\eta}_3\left (\bar{\mathbf{\Theta}}_3\right)_{j,j}\right)$, which are concave functions of $\mathbf{x}_r$. These two terms can be replaced by their upper bounds based on the first-order Taylor-series expansion around the solution of the previous iteration. Also, the dependence of $g\left (\mathbf{x}_r\right)$ on $\bar{\eta}_3$ and $\bar{\eta}_4$ can be resolved by choosing the values of $\bar{\eta}_3$ and $\bar{\eta}_4$  as follows:
\begin{align}
\label{eq:eq_eta2}
     {1-\bar{\eta}^t_i}=\frac{\bar{\beta}_i\bar{\eta}^t_i}{L_{\bar{\mathbf{\Theta}}_i}}
    \sum_{j=1}^{L_{\bar{\mathbf{\Theta}}_i}} \frac{(\bar{\mathbf{\Theta}}^t_i)_{j,j}}{1+\bar{\eta}^t_i(\bar{\mathbf{\Theta}}^t_i)_{j,j}}, i=3,4.
\end{align}
where  $t$ denotes the $t$th iteration. In other words, at iteration $t$,
the following convex problem is solved:
\begin{equation}
\label{eq:opt_inst2}
\begin{aligned}
& \mathbf{x}_r^{t+1}=\argmin_{\mathbf{x}_r}
& &  h^t\left (\mathbf{x}_r\right)\\
& \quad \quad \quad \quad \quad \text{s.t.} & &
   \sum_{i = 1}^r \mathbf{q}_r(i)+P_n \leq P_A^{max} \\
& & &  \mathbf{q}_r(i) \geq 0,\ \forall i=1,\dots, r\\
&&& P_n\geq 0\\
& & &  0\leq P_B \leq P_B^{max}.
\end{aligned}
\end{equation}
where
\begin{align}
\label{eq:g_bar}
    &h^t\left (\mathbf{x}_r\right) = -\log|\mathbf{C}_B+\mathbf{H}\mathbf{V}_1\mathbf{Q}_r\mathbf{V}_1^H\mathbf{H}^H|\notag\\&-\sum_{j=1}^{L_{\bar{\mathbf{\Theta}}_4}} \log\left (1+\bar{\eta}^t_4\left (\bar{\mathbf{\Theta}}_4\right)_{j,j}\right) +\left (\mathbf{x}_r-\mathbf{x}_r^t\right)^T \nabla_{\mathbf{x}_r}f^t|_{\mathbf{x}_r=\mathbf{x}_r^t},
\end{align}
and $f^t\left (\mathbf{x}_r\right)= \log|\mathbf{C}_B| +\sum_{j=1}^{L_{\bar{\mathbf{\Theta}}_3}} \log\left (1+\bar{\eta}_3^t\left (\bar{\mathbf{\Theta}}_3\right)_{j,j}\right)$.
All constant terms  in  \eqref{eq:21}  are omitted in (\ref{eq:g_bar}) as they do not affect the optimization.

\begin{algorithm}[bt!]
\label{alg:alg1}
\begin{algorithmic}
\STATE Choose proper $\epsilon$, $\bar{\eta}^0_3$, $\bar{\eta}^0_4$, and set $g^{min}=0$.
    \FOR{$r=1:N_A$}
    \STATE Initialize $\mathbf{x}_r$ satisfying the constraints.
    \STATE Set $t=0$.
    \WHILE{$\frac{\|\mathbf{x}_r^t-\mathbf{x}_r^{t-1}\|}{\|\mathbf{x}_r^{t-1}\|}> \epsilon$}\STATE
    Solve  (\ref{eq:opt_inst2}) to get $\mathbf{x}_r^{t+1}$.\\
    Update $\bar{\eta}_3$, $\bar{\eta}_4$ by solving  (\ref{eq:eq_eta2}) using $\mathbf{x}_r^{t+1}$.\\
    $t=t+1.$
    \ENDWHILE
     \IF{$g\left (\mathbf{x}_r^t\right)<g^{min}$}
    \STATE $g^{min}=g\left (\mathbf{x}_r^t\right)$
    \STATE $\mathbf{x}^{min}=\mathbf{x}_r^t$
    \ENDIF
    \ENDFOR
    \STATE Return $\mathbf{x}^{min}$.
    \end{algorithmic}
    \caption{Algorithm for power optimization}
\end{algorithm}

Algorithm 1 details the proposed procedure for the power optimization.
It is worth mentioning that a different optimization approach was explored in our previous work \cite{sohrabi_secrecy}, where a stochastic optimization approach was applied to the objective function $R_{AB}-\mathcal{E}[R_{AE}]$. These two approaches  more or less give the same results, but Algorithm 1 in this paper has a significantly lower complexity.
To illustrate a performance gain of the optimized powers over non-optimal powers, we will not repeat similar figures as available in \cite{sohrabi_secrecy}. But next we consider the maximum tolerable number of antennas on Eve, which can be defined in several ways. One is
\begin{equation}\label{}
  \tilde N_E\triangleq \max N_E, \mbox{ s.t. } R_{AB}-R_{AE}>0.
\end{equation}
which however depends on instantaneous CSI everywhere. Another is
\begin{equation}\label{eq:bar_N_E}
  \bar N_E\triangleq \max N_E, \mbox{ s.t. } \mathcal{R}_{AB}-\mathcal{R}_{AE}>0.
\end{equation}
where $\mathcal{R}_{AB}$ and $\mathcal{R}_{AE}$ are asymptotic forms of $R_{AB}$ and $R_{AE}$ respectively. Obviously, $\bar N_E$ is a function of $\mathbf{x}_r$. The third definition is
\begin{equation}\label{}
  \bar N_E^{opt}\triangleq \max N_E, \mbox{ s.t. } \left (\frac{1}{L}\sum_{l=1}^L (R_{AB,l}^{opt}-\mathcal{R}_{AE,l}^{opt})\right )^+>0.
\end{equation}
where $R_{AB,l}^{opt}-\mathcal{R}_{AE,l}^{opt}$ is a value of $R_{AB}-\mathcal{R}_{AE}$ corresponding to a random realization of $\mathbf{H}$ and $\mathbf{G}$ and the corresponding optimal $\mathbf{x}_{r}$ and $r$, and $L$ is the total number of realizations of $\mathbf{H}$ and $\mathbf{G}$ for each $N_E$.

To obtain $\bar N_E$ in \eqref{eq:bar_N_E}, we will let  $N_A>N_B$, $r=N_B$ and $\mathbf{Q}_r= \frac{P_s}{N_B}\mathbf{I}$. Hence, $range(\mathbf{V}_2)$ is the null-space of $\mathbf{H}$. As a consequence, $\mathbf{H}\mathbf{V}_1\mathbf{Q}_r\mathbf{V}_1^H\mathbf{H}^H=\frac{{P}_S}{N_B}\mathbf{H}\mathbf{H}^H$, and \eqref{eq:R_AB} becomes
\begin{equation}\label{eq:RAB}
    \begin{aligned}
   & R_{AB} = \\&\log |\mathbf{I}+ \frac{\rho {P}_B}{N_B} \mathbf{G}\mathbf{G}^H+\frac{P_s}{N_B}\mathbf{H}\mathbf{H}^H|-\log |\mathbf{I}+ \frac{\rho {P}_B}{N_B} \mathbf{G}\mathbf{G}^H|\\&=
    \log |\mathbf{I}+ \mathbf{J}_1\mathbf{\Theta}_1\mathbf{J}_{1}^H|-\log |\mathbf{I}+ \mathbf{J}_2\mathbf{\Theta}_2\mathbf{J}_2^H|
\end{aligned}
\end{equation}
where $\mathbf{J}_1=\frac{1}{\sqrt{N_B}}\left [\mathbf{H}, \mathbf{G}\right ]$, $\mathbf{J}_2=\frac{1}{\sqrt{N_B}}\mathbf{G}$, $\mathbf{\Theta}_2=\rho P_B \mathbf{I}$, and
$\mathbf{\Theta}_1=\mathrm{diag}\left (\left [{P_s} \mathbf{1}_{N_A},{\rho {P}_B}\mathbf{1}_{N_B}\right ]\right)
$.
Applying the lemma to \eqref{eq:RAB} yields that for $\beta_1=\frac{N_A+N_B}{N_B}$, $\beta_2=1$ and a large $N_B$,
\begin{eqnarray}\label{eq:RAB_a}
  \frac{R_{AB}}{N_B} &\simeq& \Omega\left (\beta_1, \mathbf{\Theta}_1 ,\eta_1\right)-\Omega\left (\beta_2, \mathbf{\Theta}_2 ,\eta_2\right)\nonumber\\
   &=& (\beta_1-1) \log (1+\eta_1 P_s) + \log\frac{1+\eta_1 \rho P_B}{1+\eta_2 \rho P_B} \notag \nonumber\\
   &&  + \log(\frac{\eta_2}{\eta_1})+(\eta_1-\eta_2)\log (e)\nonumber\\
   &\triangleq & \frac{\mathcal{R}_{AB}}{N_B}
\end{eqnarray}
where
$\eta_1$ is the solution of $\eta$ to \eqref{eq:eq_eta} with $\beta=\beta_1$ and $\boldsymbol{\Theta}=\mathbf{\Theta}_1$, which reduces to
\begin{equation}
    1-\eta_1=\frac{(\beta_1-1)\eta_1 P_s}{1+\eta_1 P_s}+\frac{\eta_1 \rho P_B}{1+\eta_1 \rho P_B}.
\end{equation}
and $\eta_2$ is the solution to
\begin{equation}
    1-\eta_2=\frac{\eta_2 \rho P_B}{1+\eta_2 \rho P_B},
\end{equation}
or equivalently $\eta_2=\frac{\sqrt{1+4\rho P_B}-1}{2\rho P_B}$.

Also with $r=N_B$ and $\mathbf{Q}_r= \frac{P_s}{N_B}\mathbf{I}$,
 $\mathcal{R}_{AE}$ in
 \eqref{eq:RAE_b}  reduces to
\begin{eqnarray}\label{eq:RAE_a}
  \frac{R_{AE}}{N_E} &\simeq& \Omega\left (\beta_3, \mathbf{\Theta}_3 ,\eta_3\right)-\Omega\left (\beta_4, \mathbf{\Theta}_4 ,\eta_4\right) \nonumber \\
   &=& (\beta_3-\beta_4)\log \left(1+\frac{a P_s \eta_3}{\beta_3-\beta_4}\right)\nonumber\\
   &&+(\beta_3-\beta_4)\log \frac{\beta_3-\beta_4+ b P_B \eta_3}{\beta_3-\beta_4+ b P_B \eta_4}\nonumber\\
   &&+(2\beta_4-\beta_3)\log \frac{2\beta_4-\beta_3+ a P_n \eta_3}{2\beta_4-\beta_3+ a P_n \eta_4}\nonumber\\
    &&+\log \left(\frac{\eta_4}{\eta_3}\right) + \left(\eta_3-\eta_4\right)\log \left(e\right)\nonumber\\
    &\triangleq & \frac{\mathcal{R}_{AE}}{N_E}
\end{eqnarray}
where

 \begin{equation}
     \mathbf{\Theta}_3=N_E\mathrm{diag}\left (\left [\frac{aP_s}{N_B} \mathbf{1}_{N_B},\frac{aP_n}{N_A-N_B}\mathbf{1}_{N_A-N_B},\frac{b P_B}{N_B} \mathbf{1}_{N_B}\right ]\right),
\end{equation}
and
\begin{equation}
    \mathbf{\Theta}_4=N_E\mathrm{diag}\left (\left [\frac{aP_n}{N_A-N_B}\mathbf{1}_{N_A-N_B},\frac{b P_B}{N_B} \mathbf{1}_{N_B}\right ]\right),
\end{equation}
also
$\beta_3=\frac{N_A+N_B}{N_E}$, $\beta_4=\frac{N_A}{N_E}$, $\eta_3$  is the solution to
\begin{eqnarray}\label{eq:delta_3}
    1-\eta_3&=& \frac{aP_s\eta_3}{1+aP_s\eta_3\frac{1}{\beta_3-\beta_4}}+
   \frac{aP_n\eta_3}{1+aP_n\eta_3\frac{1}{2\beta_4-\beta_3}}\nonumber\\
   && +\frac{bP_B\eta_3}{1+bP_B\eta_3\frac{1}{\beta_3-\beta_4}},
\end{eqnarray}
and $\eta_4$ is the solution to
\begin{eqnarray}\label{eq:delta_4}
     1-\eta_4 &=& \frac{aP_n\eta_4}{1+aP_n\eta_4\frac{1}{2\beta_4-\beta_3}}\nonumber\\
   &&+\frac{bP_B\eta_4}{1+bP_B\eta_4\frac{1}{\beta_3-\beta_4}} .
\end{eqnarray}

Fig. \ref{fig:zero_sec_comp_pa} shows $\bar N_E$ versus $N_A$ and $\bar N_E^{opt}$ versus $N_A$ where $P_A^{max}=P_B^{max}=30dB$ and $N_A=2N_B$ ($\beta_1=3$). For  $\bar N_E$, we also chose $P_s=P_n=\frac{P_A}{2}$ and $P_A=P_B$.
We see that $\bar N_E^{opt}$ is consistently larger than $\bar N_E$. And the gap between the two is due to the power optimization.

\begin{figure}[bt!]
    \centering
    \includegraphics[width=1\linewidth]{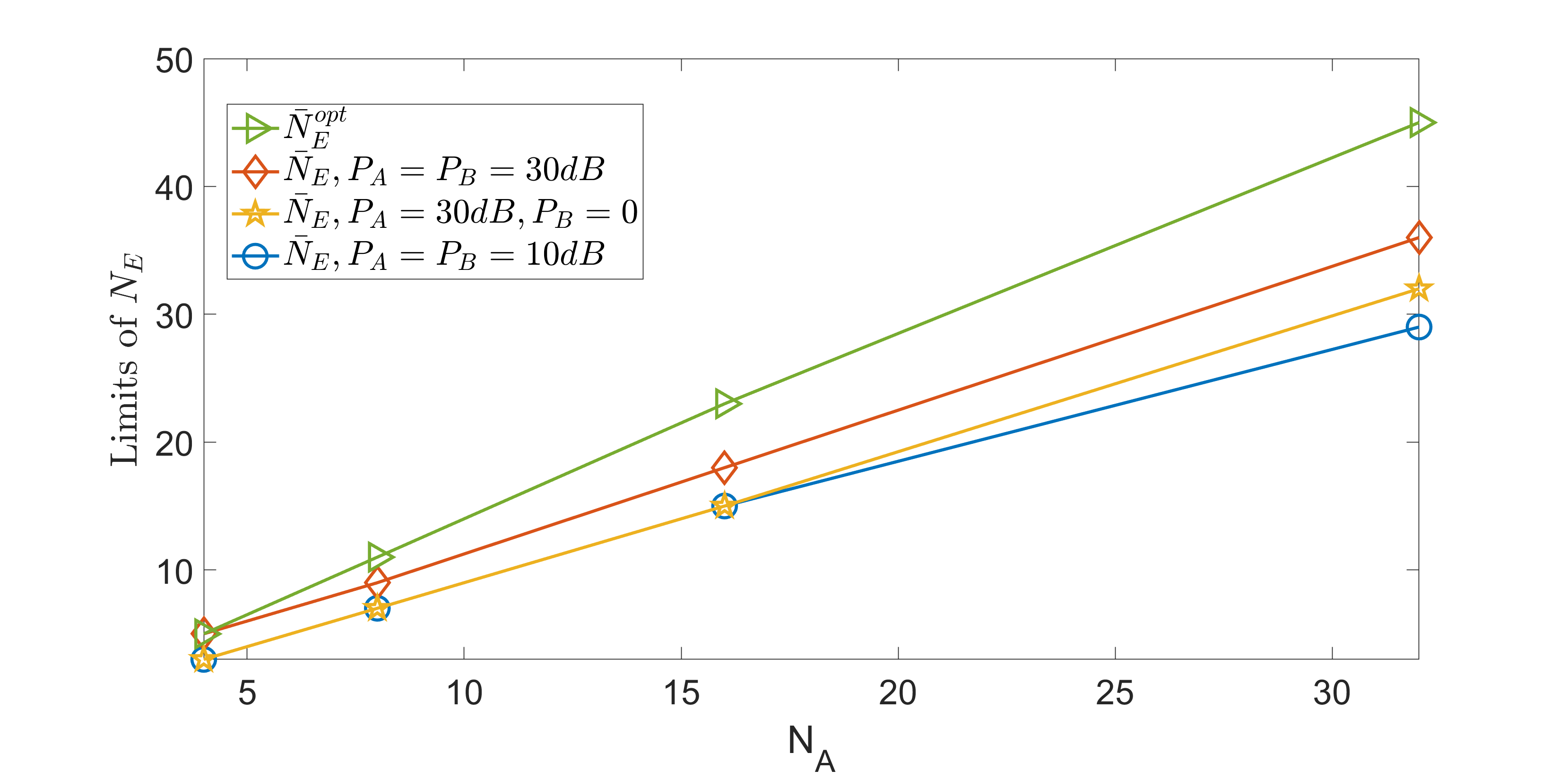}
    \caption{Comparison of  $\bar{N}_E$ and $\bar N_E^{opt}$  vs $N_A$.}
    \label{fig:zero_sec_comp_pa}
\end{figure}

During simulation, we also observed that the optimal $P_s$ is often distributed approximately equally between different streams, and that if $N_E$ gets larger, the optimization favors  smaller $P_B$ and smaller $r$ (the latter of which is consistent with a result in \cite{7740063} which does not use full-duplex jamming at Bob).

With the same parameters as in Fig. \ref{fig:zero_sec_comp_pa}, Fig. \ref{fig:conv_ci} illustrates a convergence property of Algorithm 1 where the mean number of iterations needed for convergence versus $N_A$ is shown. Also shown in Fig. \ref{fig:conv_ci} is the 95\% confidence interval of the number of iterations needed for convergence versus $N_A$. We used 100 random realizations of the channels for each value of $N_A$. The threshold $\epsilon$ used for convergence was chosen to be 0.01.

\begin{figure}[bt!]
    \centering
    \includegraphics[width=1\linewidth]{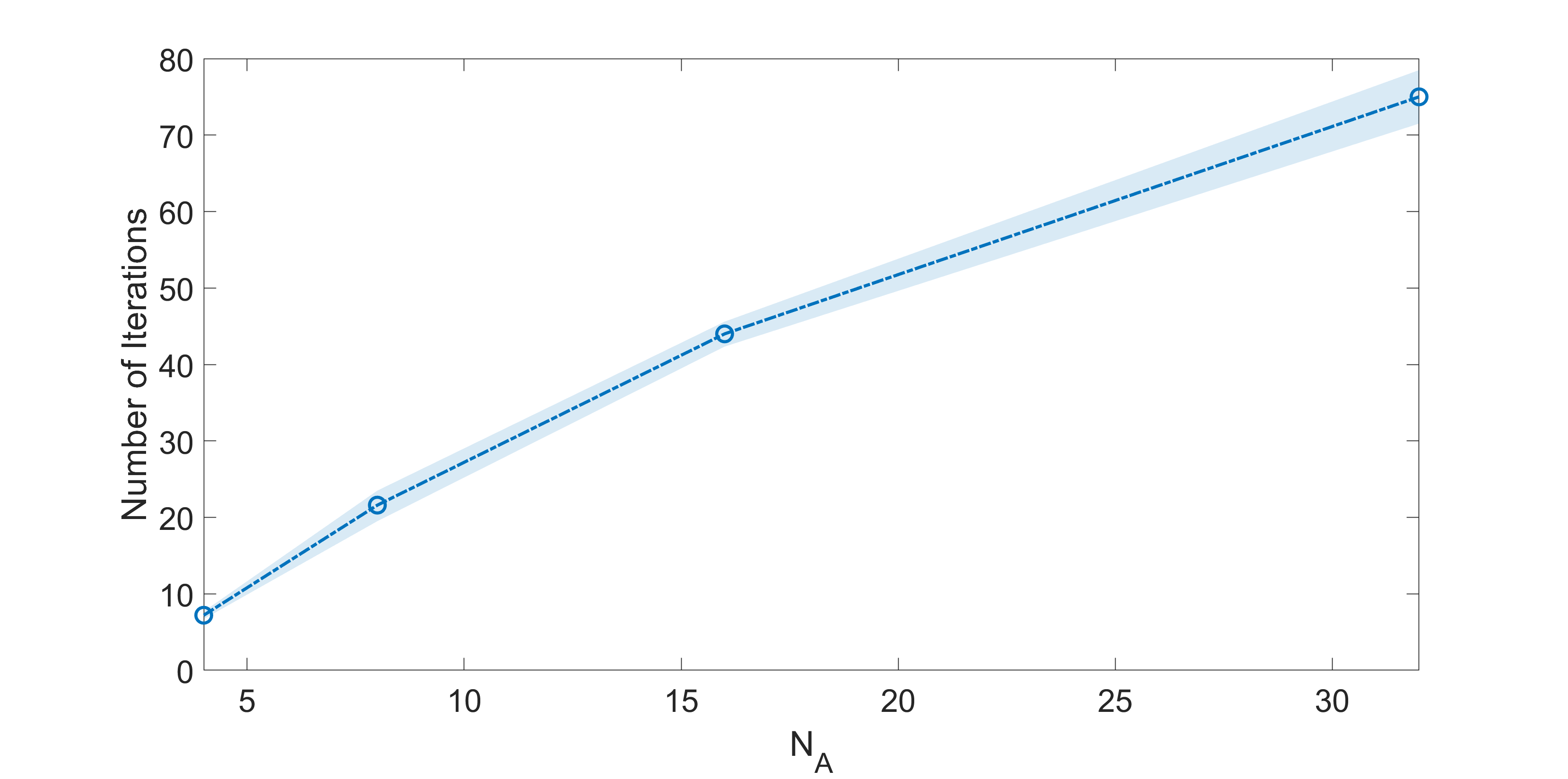}
    \caption{95\% confidence interval of the number of iterations needed for  convergence of Algorithm 1 vs. $N_A$. The dark line is the mean of the number of iterations.}
    \label{fig:conv_ci}
\end{figure}

\subsection{When the Number of Antennas on Eve is Very Large}

We now consider the case where $N_E\gg N_A> N_B$, $r=N_B$ and $\mathbf{Q}_r= \frac{P_s}{N_B}\mathbf{I}$.
It follows that $\beta_3\ll 1$ and $\beta_4\ll 1$. Hence, \eqref{eq:delta_3} implies $1-\eta_3\approx \beta_3$ and \eqref{eq:delta_4} implies $1-\eta_4\approx\beta_4$. Furthermore, referring to the terms in \eqref{eq:RAE_a}, we have
\begin{eqnarray}
  &&\lim_{N_E\rightarrow \infty} N_E(\beta_3-\beta_4)\log \left(1+\frac{a P_s \eta_3}{\beta_3-\beta_4}\right)\nonumber \\
   &=&N_B \log \left (1+\frac{N_EaP_s}{N_B} \right ),
\end{eqnarray}
\begin{eqnarray}\label{}
 && \lim_{N_E\rightarrow \infty} N_E(\beta_3-\beta_4)\log \frac{\beta_3-\beta_4+ b P_B \eta_3}{\beta_3-\beta_4+ b P_B \eta_4}\nonumber\\
  &=&N_B\log 1=0,
\end{eqnarray}
\begin{eqnarray}\label{}
 && \lim_{N_E\rightarrow \infty} N_E(2\beta_4-\beta_3)\log \frac{2\beta_4-\beta_3+ a P_n \eta_3}{2\beta_4-\beta_3+ a P_n \eta_4}\nonumber\\
  &=&(N_A-N_B)\log 1 =0,
\end{eqnarray}
\begin{eqnarray}
  && \lim_{N_E\rightarrow \infty} N_E \left (\log \left(\frac{\eta_4}{\eta_3}\right) + \left(\eta_3-\eta_4\right)\log \left(e\right)\right ) \nonumber\\
   &=& \lim_{N_E\rightarrow \infty} N_E\left (\log \frac{1-\beta_4}{1-\beta_3}+(\beta_4-\beta_3)\log e\right )\nonumber\\
   &=& \lim_{N_E\rightarrow \infty} N_E \log\left (1+\frac{N_B}{N_E-(N_A+N_B)} \right )-N_B\log e \nonumber\\
   &=& \lim_{N_E\rightarrow \infty} N_E  \frac{ N_B}{N_E-(N_A+N_B)}\log e-N_B\log e\nonumber\\
   &=&0.
\end{eqnarray}
The above equations imply that all terms, except the first, in $\mathcal{R}_{AE}$ from \eqref{eq:RAE_a} converge to zero. Therefore,
\begin{equation}\label{eq:app_Rae}
  \lim_{N_E\rightarrow \infty}R_{AE} =\lim_{N_E\rightarrow \infty}\mathcal{R}_{AE}= N_B\log\left (1+\frac{N_EaP_s}{N_B} \right )\triangleq \mathcal{R}^*_{AE}
\end{equation}
which is independent of $P_n$ and $P_B$ (and hence the optimal $P_n$ and $P_B$ are now zero).
This result implies that if Eve has an unlimited number of antennas then the jamming noise from either Alice or Bob has virtually no impact on Eve's capacity to receive the information from Alice. Furthermore, we see that $R_{AE}$ increases without upper bound as $N_E$ increases while $R_{AB}$ stays independent of $N_E$ (for large $N_E$).

Fig. \ref{fig:eve_approx_asympt} compares $\mathcal{R}^*_{AE}$ from \eqref{eq:app_Rae} with $\mathcal{R}_{AE}$ from \eqref{eq:RAE_a} where $N_A=2N_B=8$ and $P_A = P_B=2P_s=2P_n$. We see that the two results are very close when $N_E>40$.

\begin{figure}
    \centering
    \includegraphics[width = \linewidth]{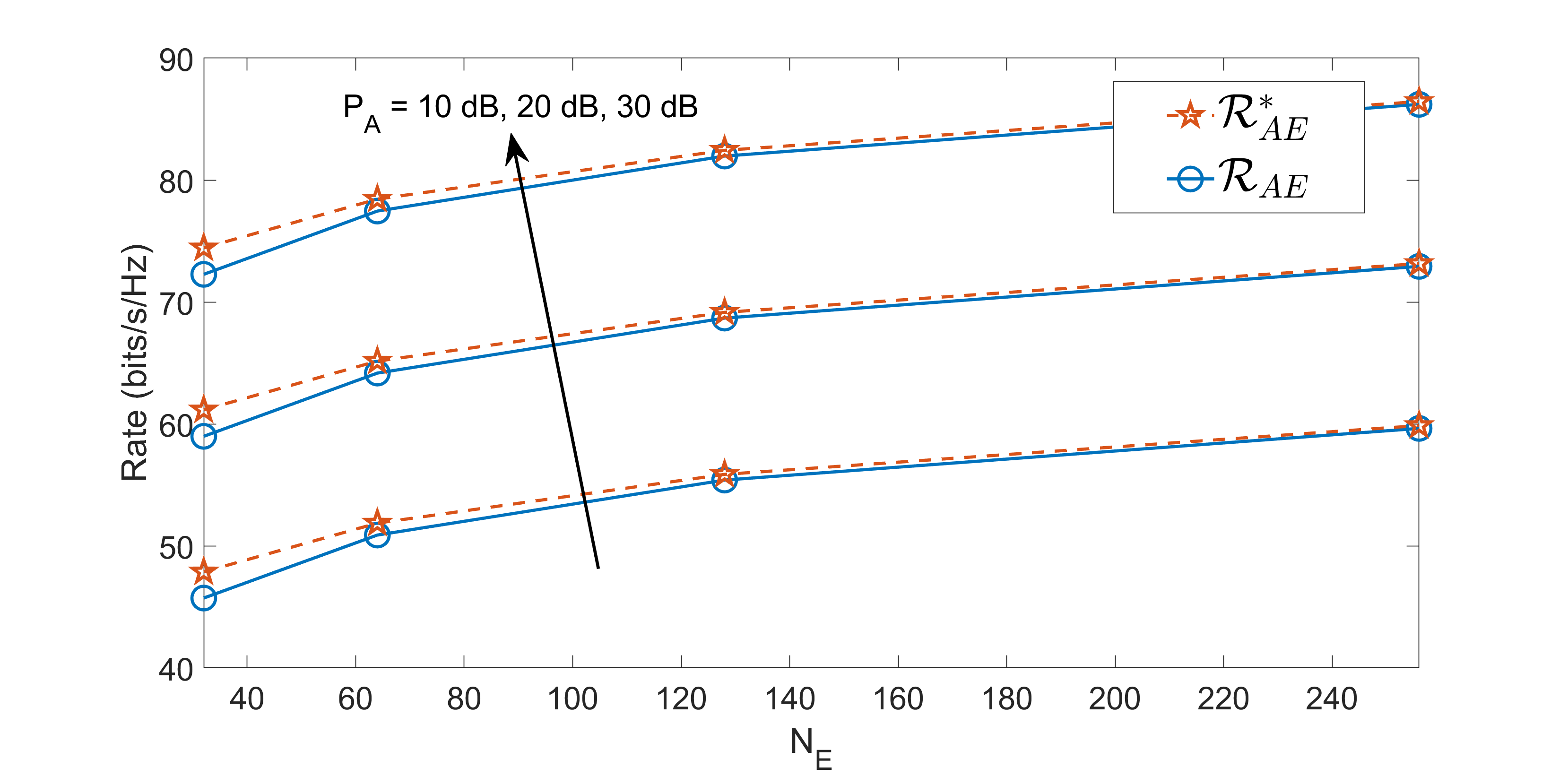}
    \caption{The convergence of $\mathcal{R}_{AE}$ to $\mathcal{R}^*_{AE}$}
    \label{fig:eve_approx_asympt}
\end{figure}

\section{Analysis of ANECE}\label{sec:anece}
A key observation from the previous section is that if Eve knows its CSI and the number of antennas on Eve is large, then neither the artificial noise from multi-antenna Alice nor the full-duplex jamming from multi-antenna Bob can rescue Alice and Bob from being totally exposed to Eve. (This observation is an extension of a previous observation for single-antenna users shown in \cite{hua2018advanced}.)
To handle Eve with large number of antennas, there is a two-phase method involving anti-eavesdropping channel estimation (ANECE) proposed in \cite{hua2018advanced}:
in phase 1 the users conduct ANECE which allows users to obtain their CSI but denies Eve the same ability; and in phase 2 the users transmit information to each other with Eve not knowing its CSI. Both phases are within a common coherence period. While the earlier work has shown promising properties of ANECE, the understanding of ANECE is still incomplete. In this section, we show two new analyses of the secrecy rate of a two-user MIMOME network assisted by ANECE.

To simplify the problem, we do not consider the artificial noise from either Alice or Bob. The first analysis assumes a (globally known) statistical model for all CSI in the network. And the analysis is based on ideal full-duplex devices where there is no self-interference. When a result of this analysis is applied to practice, one must restrict the application to situations where the residual self-interference is negligible. Typically, the residual self-interference is proportional to the transmitted power which increases with the distance between devices. So, a situation where the residual self-interference is negligible corresponds generally to a short-range communication. The second analysis assumes that Eve does not know the statistical distribution of its CSI but rather assumes that Eve is able to perform blind detection of the information from Alice. These two analyses constitute an important new understanding of ANECE, which is not available elsewhere.

A theory where Eve knows the statistical distribution of its CSI can be applicable to situations where Eve's CSI is statistically stationary and experiences many cycles of coherence periods in a time window of interest. A theory where Eve does not know its CSI distribution can be applicable to situations where Eve's CSI is statistically un-stationary  in a time window of interest. Both assumptions have their own merits.

\subsection{Eve uses a statistical model of its CSI}
 Consider a block Rayleigh fading channel for which Alice and Bob first conduct ANECE by transmitting their pilot signals $\mathbf{p}_A(k)$ and $\mathbf{p}_B(k)$ concurrently (in full-duplex mode) where $k=1,\cdots,K_1$ ($K_1$ is the length of the pilot), and then transmit information to each other (over $K_2$ samples). For information transmission, we will consider a one-way transmission and a two-way transmission separately.

  \subsubsection{Channel estimation}
  Define $\mathbf{P}_i=[\mathbf{p}_i(1),\cdots,\mathbf{p}_i(K_1)]$ where $i=A,B$. then the corresponding signals received by Alice, Bob and Eve can be expressed as
 	\begin{subequations}\label{ph1}
		\begin{align}
		&\mathbf{Y}_{A} = \mathbf{H}^{T}\mathbf{P}_{B} + \mathbf{N}_{A}\\
		&\mathbf{Y}_{B} = \mathbf{H}\mathbf{P}_{A} + \mathbf{N}_{B}\label{ph1.b}\\
		&\mathbf{Y}_{E} = \sqrt{a}\mathbf{A}\mathbf{P}_{A} + \sqrt{b}\mathbf{B}\mathbf{P}_{B} + \mathbf{N}_{E}\label{ph1.e}
		\end{align}
	\end{subequations}
where $\mathbf{H}$ is the reciprocal channel matrix between Alice and Bob, and all the noise matrices consist of i.i.d. $\mathcal{CN}(0,1)$.   Here, the self-interferences at Alice and Bob are assumed to be negligible.

It is known and easy to show that for the best performance of the maximum likelihood (ML) estimation (or the MMSE estimation as shown later)  of $\mathbf{H}$ by Bob, $\mathbf{P}_A$ should  be such that $\mathbf{P}_A\mathbf{P}_A^H=\frac{K_1P_A}{N_A}\mathbf{I}_{N_A}$. Similarly, $\mathbf{P}_B$ should be such that $\mathbf{P}_B\mathbf{P}_B^H=\frac{K_1P_B}{N_B}\mathbf{I}_{N_B}$.

In the following analysis, we assume that $\mathbf{H}$, $\mathbf{A}$ and $\mathbf{B}$ all consist of i.i.d. $\mathcal{CN}(0,1)$ elements (from one coherence block to another). This statistical model along with the large-scale fading factors $a$ and $b$ is assumed to be known to everyone.

  Without loss of generality, let
  $N_{A}\geq N_{B}$. Without affecting the channel estimation performance at Alice and Bob, but maximizing the difficulty of channel estimation for Eve, we let the row span of $\mathbf{P}_{B}$ be part of the row span of $\mathbf{P}_{A}$. More specifically, we can write $\mathbf{P}_{A} = \sqrt{\frac{K_{1}P_{A}}{N_{A}}}[\mathbf{I}_{N_{A}},\mathbf{0}_{N_{A}\times(K_{1}-N_{A})}]
   \boldsymbol{\Gamma}$ and $\mathbf{P}_{B} = \sqrt{\frac{K_{1}P_{B}}{N_{B}}}[\mathbf{I}_{N_{B}},\mathbf{0}_{N_{B}\times(K_{1}-N_{B})}]
   \boldsymbol{\Gamma} $ where $\boldsymbol{\Gamma}$ can be any $K_1\times K_1$ unitary matrix. In this way, any estimates of $\mathbf{A}$ and $\mathbf{B}$ by Eve, denoted by $\mathbf{\hat A}$ and $\mathbf{\hat B}$, are ambiguous in that $[\sqrt{a}\mathbf{\hat A},\sqrt{b}\mathbf{\hat B}]$ can be added to $\boldsymbol{\Theta}[\mathbf{C}_A,\mathbf{C}_B]$ without affecting Eve's observation $\mathbf{Y}_E$ where $\boldsymbol{\Theta}\in \mathbb{C}^{N_E\times N_B}$ is arbitrary and $[\mathbf{C}_A,\mathbf{C}_B][\mathbf{P}_{A}^T,\mathbf{P}_{B}^T]^T=0$.

	Let $\mathbf{h} = vec(\mathbf{H})$,  $\mathbf{a} = vec(\mathbf{A})$, $\mathbf{b} = vec(\mathbf{B})$, $\mathbf{y}_{A} = vec(\mathbf{Y}_{A}^{T})$, $\mathbf{y}_{B} = vec(\mathbf{Y}_{B})$, $\mathbf{n}_{A} = vec(\mathbf{N}_{A}^{T})$ and $\mathbf{n}_B = vec(\mathbf{N}_B)$. Note $vec(\mathbf{X}\mathbf{Y}\mathbf{Z})=(\mathbf{Z}^T\otimes\mathbf{X})vec(\mathbf{Y})$. Then \eqref{ph1} becomes
 	\begin{subequations}\label{ph1.1}
		\begin{align}
		&\mathbf{y}_{A} = (\mathbf{I}_{N_A} \otimes \mathbf{P}_{B}^{T}) \mathbf{h} + \mathbf{n}_{A}\\
		&\mathbf{y}_{B} = (\mathbf{P}_{A}^{T} \otimes \mathbf{I}_{N_B}) \mathbf{h} + \mathbf{n}_{B}\label{ph1.1.b}\\
		&\mathbf{y}_{E} = \sqrt{a}(\mathbf{P}_{A}^{T}\otimes \mathbf{I}_{N_E}) \mathbf{a} + \sqrt{b}(\mathbf{P}_{B}^{T}\otimes \mathbf{I}_{N_E}) \mathbf{b} + \mathbf{n}_{E}.\label{ph1.1.e}
		\end{align}
	\end{subequations}

It is known that the minimum-mean-squared-error (MMSE) estimate of a vector $\mathbf{x}$ from another vector $\mathbf{y}$ is $\mathbf{\hat x}=\mathbf{K}_{\mathbf{x},\mathbf{y}}\mathbf{K}_{\mathbf{y}}^{-1}\mathbf{y}$ with $\mathbf{K}_{\mathbf{x},\mathbf{y}}=\mathcal{E}\{\mathbf{x}\mathbf{y}^H\}$ and $\mathbf{K}_{\mathbf{y}}=\mathcal{E}\{\mathbf{y}\mathbf{y}^H\}$. And the error $\Delta\mathbf{x}=\mathbf{x}-\mathbf{\hat x}$ has the covariance matrix $\mathbf{K}_{\Delta\mathbf{x}}=\mathbf{K}_{\mathbf{x}}-\mathbf{K}_{\mathbf{x},\mathbf{y}}
\mathbf{K}_{\mathbf{y}}^{-1}\mathbf{K}_{\mathbf{x},\mathbf{y}}^H$.

Let $\mathbf{\hat h}_A$ be the MMSE estimate of $\mathbf{h}$ by Alice, and $\Delta\mathbf{h}_A = \mathbf{h}-\mathbf{\hat h}_A$ be its error. Similar notations are defined for Bob and Eve.
It is easy to show that the covariance matrices of the errors of these estimates are, respectively, $\mathbf{K}_{\Delta{\mathbf{h}}_A}  = \sigma^{2}_A\mathbf{I}_{N_AN_B}$, $\mathbf{K}_{\Delta{\mathbf{h}}_B}  = \sigma^{2}_B\mathbf{I}_{N_AN_B}$, $\mathbf{K}_{\Delta{\mathbf{a}}}
	= \sigma^{2}_{EA}\mathbf{I}_{N_AN_E}$ and $\mathbf{K}_{\Delta{\mathbf{b}}}
	= \sigma^{2}_{EB}\mathbf{I}_{N_BN_E}$ where $\sigma^{2}_A = \frac{1}{1 +  K_{1}P_B/N_B}$,   $\sigma^{2}_B = \frac{1}{1 +  K_{1}P_A/N_A}$, $\sigma^{2}_{EA} = \frac{bK_{1}P_{B}/N_{B} + 1}{(aK_{1}P_{A}/N_{A} + bK_{1}P_{B}/N_{B}) +1}$ and $\sigma^{2}_{EB} = \frac{aK_{1}P_{A}/N_{A} + 1}{(aK_{1}P_{A}/N_{A} + bK_{1}P_{B}/N_{B}) +1}$.
	
	\subsubsection{One-way information transmission}
	Now assume that following the pilots (over $K_1$ samples) transmitted by Alice and Bob in full-duplex mode, Alice transmits information (over $K_2$ samples) to Bob in half-duplex mode. Namely, while the first phase is in full-duplex, the second phase is in half-duplex. In the second phase, Bob and Eve receive
	\begin{equation}\label{ph2}
	\begin{aligned}
	&\mathbf{Y}_{B} = \mathbf{H}\mathbf{S}_{A} + \mathbf{N}_{B}\\
	&\mathbf{Y}_{E} = \sqrt{a}\mathbf{A}\mathbf{S}_{A}  + \mathbf{N}_{E}
	\end{aligned}
	\end{equation}
	where $\mathbf{S}_{A} = [\mathbf{s}_{A}(1),\dots, \mathbf{s}_{A}(K_{2})]$. The corresponding vector forms of the above are
	\begin{subequations}\label{ph2.1}
		\begin{align}
		&\mathbf{y}_{B} = (\mathbf{I}_{K_2}\otimes\mathbf{H})\bar{\mathbf{s}}_{A} + \mathbf{n}_{B}\\
		&\mathbf{y}_{E} = \sqrt{a}(\mathbf{I}_{K_2}\otimes\mathbf{A})\bar{\mathbf{s}}_{A} + \mathbf{n}_{E}\label{ph2.1.e}
		\end{align}
	\end{subequations}
	where $\bar{\mathbf{s}}_{A} = vec(\mathbf{S}_{A})$ (which is assumed to be independent of all channel parameters). Then an achievable secrecy rate in bits/s/Hz in phase 2 from Alice to Bob (conditional on the MMSE channel estimation in phase 1) is
	\begin{equation}\label{eq:R_one}
	\mathcal{R}_{one} = \frac{1}{K_2} \big(I(\bar{\mathbf{s}}_{A};\mathbf{y}_{B}|\hat{\mathbf{h}}_{B}) - I(\bar{\mathbf{s}}_{A};\mathbf{y}_{E}|\hat{\mathbf{a}})\big)^{+}
	\end{equation}
	
To analyze $\mathcal{R}_{one}$, we now assume
	$P_{A} = P_{B} = P$ (which holds for both phases 1 and 2) and that $\mathbf{s}_{A}(k)$ are i.i.d. with $\mathcal{CN}(0, \frac{P_{A}}{N_{A}}\mathbf{I}_{N_A})$. We also use $\hat{\mathbf{H}}_{B} = ivec(\hat{\mathbf{h}}_{B})\in\mathbb{C}^{N_B\times N_A}$ (i.e., $\hat{\mathbf{h}}_{B}=vec(\hat{\mathbf{H}}_{B})$).

We will next derive lower and upper bounds on $\mathcal{R}_{one}$. To do that, we need to obtain lower and upper bounds on $I(\bar{\mathbf{s}}_{A};\mathbf{y}_{B}|\hat{\mathbf{h}}_{B})$ and those on $I(\bar{\mathbf{s}}_{A};\mathbf{y}_{E}|\hat{\mathbf{a}})$.

First, we have
\begin{eqnarray}\label{eq:111}
  I(\bar{\mathbf{s}}_{A};\mathbf{y}_{B}|\hat{\mathbf{h}}_{B})&=&h(\bar{\mathbf{s}}_{A}|\hat{\mathbf{h}}_{B}) - h(\bar{\mathbf{s}}_{A}|\mathbf{y}_{B},\hat{\mathbf{h}}_{B})\nonumber \\
  &=&
h(\bar{\mathbf{s}}_{A})-h(\bar{\mathbf{s}}_{A}|\mathbf{y}_{B},\hat{\mathbf{h}}_{B}).
\end{eqnarray}
It is known that $h(\bar{\mathbf{s}}_{A})=\log\left [(\pi e)^{N_AK_2}\left |\frac{P_A}{N_A}\mathbf{I}_{N_AK_2}\right |\right ]$. It is also known \cite{ElGamal2011} that for a random vector $\mathbf{s}\in \mathbb{C}^{n\times 1}$ and another random vector $\mathbf{w}$, $h(\mathbf{s}|\mathbf{w})\leq \log \left [(\pi e)^n |\mathbf{K}_{\mathbf{s}|\mathbf{w}}| \right ]$ where $\mathbf{K}_{\mathbf{s}|\mathbf{w}}=\mathbf{K}_{\mathbf{s}}-
\mathbf{K}_{\mathbf{s},\mathbf{w}}(\mathbf{K}_{\mathbf{w}})^{-1}\mathbf{K}_{\mathbf{s},\mathbf{w}}$ which is the covariance matrix of the MMSE estimation of $\mathbf{s}$ from $\mathbf{w}$. Note that $\mathbf{y}_B = (\mathbf{I}_{K_2}\otimes \mathbf{\hat H}_B)\mathbf{\bar s}_A+(\mathbf{I}_{K_2}\otimes \Delta\mathbf{H}_B)\mathbf{\bar s}_A+\mathbf{n}_B$. Then conditional on $\mathbf{\hat H}_B$ (which is independent of $\mathbf{\bar s}_A$), the covariance matrix of the MMSE estimate of $\mathbf{\bar s}_A$ from $\mathbf{y}_B$ is
$\mathbf{K}_{\bar{\mathbf{s}}_{A}|\mathbf{y}_{B},\hat{\mathbf{h}}_{B}} = \frac{P_{A}}{N_{A}}\mathbf{I}_{N_AK_2} - \frac{P_{A}^{2}}{N_{A}^{2}}(\mathbf{I}_{K_2}\otimes\hat{\mathbf{H}}_{B}^{H})
(\frac{P_{A}}{N_{A}}(\mathbf{I}_{K_2}\otimes\hat{\mathbf{H}}_{B}\hat{\mathbf{H}}_{B}^{H}) + \mathbf{K}_{B} + \mathbf{I}_{N_BK_2})^{-1}(\mathbf{I}_{K_2}\otimes\hat{\mathbf{H}}_{B})$ where $\mathbf{K}_{B} = \mathcal{E}\{(\mathbf{I}_{K_2}\otimes\Delta \mathbf{H}_{B})\bar{\mathbf{s}}_{A}\bar{\mathbf{s}}_{A}^{H}(\mathbf{I}\otimes\Delta\mathbf{H}^{H}_{B})\} = \frac{ P_{A}}{1 +  K_{1}P_{A}/N_{A}}\mathbf{I}_{N_BK_2}$. Using $|\mathbf{I}_{r_A}+\mathbf{A}\mathbf{B}|=|\mathbf{I}_{r_B}+\mathbf{B}\mathbf{A}|$ where $r_A$ and $r_B$ are the numbers of rows of $\mathbf{A}$ and $\mathbf{B}$ respectively, one can verify that $\log|\mathbf{K}_{\bar{\mathbf{s}}_{A}|\mathbf{y}_{B},\hat{\mathbf{h}}_{B}}|=
N_AK_2\log\frac{P_A}{N_A}+\log|\mathbf{K}_B+\mathbf{I}_{N_BK_2}|-
\log|\frac{P_A}{N_A}(\mathbf{I}_{K_2}\otimes
\mathbf{\hat H}_B\mathbf{\hat H}_B^H)+\mathbf{K}_B+\mathbf{I}_{N_BK_2}|=N_AK_2\log\frac{P_A}{N_A}-K_2\log|\mathbf{I}_{N_B}+
\frac{P_A/N_A}{1+\frac{P_A}{1+K_1P_A/N_A}}\mathbf{\hat H}_B\mathbf{\hat H}_B^H|$.
Applying the above results to \eqref{eq:111} yields
	\begin{equation}\label{legiR1}
		\begin{aligned}
		&I(\bar{\mathbf{s}}_{A};\mathbf{y}_{B}|\hat{\mathbf{h}}_{B})\\
		&\geq \log|\frac{P_{A}}{N_{A}}\mathbf{I}_{N_AK_2}|  - \mathcal{E}\{\log|\mathbf{K}_{\bar{\mathbf{s}}_{A}|\mathbf{y}_{B},
\hat{\mathbf{h}}_{B}}|\}\\
		& = K_{2}\mathcal{E}\{\log|\mathbf{I}_{N_B} + \frac{P_{A}/N_{A}}{1 + \frac{ P_{A}}{1 + K_{1}P_{A}/N_{A} }}\hat{\mathbf{H}}_{B}\hat{\mathbf{H}}_{B}^{H}|\}\\
		&\triangleq\mathcal{R}_{B}^{-}.
		\end{aligned}
	\end{equation}

To derive an upper bound on $I(\bar{\mathbf{s}}_{A};\mathbf{y}_{B}|\hat{\mathbf{h}}_B)$, we now write
\begin{equation}\label{eq:112}
  I(\bar{\mathbf{s}}_{A};\mathbf{y}_{B}|\hat{\mathbf{h}}_B)= h(\mathbf{y}_{B}|\hat{\mathbf{h}}_{B}) - h(\mathbf{y}_{B}|\hat{\mathbf{h}}_{B},\bar{\mathbf{s}}_{A}).
\end{equation}
Here, $h(\mathbf{y}_{B}|\hat{\mathbf{h}}_{B})\leq \mathcal{E} \{ \log[ (\pi e)^{N_BK_2} | \frac{P_A}{N_A}(\mathbf{I}_{K_2}\otimes \mathbf{\hat H}_B\mathbf{\hat H}_B^H)+\mathbf{K}_B+\mathbf{I}_{N_BK_2} |] \}=K_2\mathcal{E}\{ \log[(\pi e)^{N_B} | \frac{P_A}{N_A}(\mathbf{\hat H}_B\mathbf{\hat H}_B^H)+(1+\frac{P_A}{1+K_1P_A/N_A})\mathbf{I}_{N_B} |]\}$, and $h(\mathbf{y}_{B}|\hat{\mathbf{h}}_{B},\bar{\mathbf{s}}_{A})=\mathcal{E} \{ \log [(\pi e)^{N_BK_2} | \frac{1}{1+K_1P_A/N_A}(\mathbf{S}_A^T\mathbf{S}_A^*\otimes \mathbf{I}_{N_B})+\mathbf{I}_{N_BK_2} |] \}=N_B\mathcal{E} \{ \log [(\pi e)^{K_2} | \frac{1}{1+K_1P_A/N_A}(\mathbf{S}_A^T\mathbf{S}_A^*)+\mathbf{I}_{K_2} | ]  \}$. Note that conditional on $\hat{\mathbf{h}}_{B}$ and $\bar{\mathbf{s}}_{A}$ the covariance matrix of $\mathbf{y}_{B}$ is invariant to $\hat{\mathbf{h}}_{B}$.  Now
 define
	\begin{equation}
	\mathbf{M}_{A} = \left\{
	\begin{aligned}
	&\frac{N_{A}}{P_{A}}\mathbf{S}_{A}^{T}\mathbf{S}_{A}^{*},~K_2<N_{A} \\
	&\frac{N_{A}}{P_{A}}\mathbf{S}_{A}^{*}\mathbf{S}_{A}^{T},~K_2\geq N_{A}
	\end{aligned}\right.
	\end{equation}
which is a full rank matrix for any $N_A$ and $K_2$ and a self-product of $\sqrt{\frac{N_A}{P_A}}\mathbf{S}_A$ with i.i.d. $\mathcal{CN}(0,1)$ entries. Also define $t_A=\min\{N_A,K_2\}$ and $r_A=\max\{N_A,K_2\}$. It follows that (as part of $h(\mathbf{y}_{B}|\hat{\mathbf{h}}_{B},\bar{\mathbf{s}}_{A})$)
	\begin{subequations}\label{ra++}
		\begin{align}
&\mathcal{E} \{ \log  | \frac{1}{1+K_1P_A/N_A}(\mathbf{S}_A^T\mathbf{S}_A^*)+\mathbf{I}_{K_2} |   \}\notag\\
&=
		\mathcal{E}\{\log| \frac{ P_{A}/N_{A}}{1 +  K_{1}P_{A}/N_{A}}\mathbf{M}_{A} +\mathbf{I}_{t_A}|\}\notag\\
		& \geq t_{A}\mathcal{E}\{\log(1 + | \frac{ P_{A}/N_{A}}{1 +  K_{1}P_{A}/N_{A}}\mathbf{M}_{A}|^{\frac{1}{t_{A}}})\}\label{ra++.1}\\
		& = t_{A}\mathcal{E}\big\{\log(1 + \frac{ P_{A}/N_{A}}{1 +  K_{1}P_{A}/N_{A}}\exp(\frac{1}{t_{A}}\ln| \mathbf{M}_{A}|)\big)\big\}\notag\\
		& \geq t_{A}\log\big(1 + \frac{ P_{A}/N_{A}}{1 +  K_{1}P_{A}/N_{A}}\exp(\frac{1}{t_{A}}\mathcal{E}\{\ln| \mathbf{M}_{A}|\})\big)\label{ra++.2}\\
		& = t_{A}\log\big(1 + \frac{ P_{A}/N_{A}}{1 +  K_{1}P_{A}/N_{A}}\exp(\frac{1}{t_{A}}\sum_{j=1}^{t_{A}}\sum_{k=1}^{r_{A}-j}\frac{1}{k} - \gamma)\big)\label{ra++.3}
		\end{align}
	\end{subequations}
	where \eqref{ra++.1} is due to the matrix Minkowski’s inequality  $|\mathbf{X}+\mathbf{Y}|^{1/n}\geq |\mathbf{X}|^{1/n}+|\mathbf{Y}|^{1/n}$ where $\mathbf{X}$ and $\mathbf{Y}$ are $n\times n$ positive definite matrices \cite{Horn2012a}, \eqref{ra++.2} is due to the Jensen's inequality and that $\log(1 + ae^{x})$ is a convex function of $x$ when $a >0$,  and \eqref{ra++.3} is based on \cite[Th.1]{Oyman2003} where $\gamma \approxeq 0.57721566$ is Euler's constant.  Defining $e_{A} = \exp(\frac{1}{t_{A}}\sum_{j=1}^{t_{A}}\sum_{k=1}^{r_{A}-j}\frac{1}{k} - \gamma)$ and applying the above results since \eqref{eq:112}, we have from \eqref{eq:112} that
	\begin{equation}\label{bobR}
	\begin{aligned}
	&I(\bar{\mathbf{s}}_{A};\mathbf{y}_{B}|\hat{\mathbf{h}}_B) \\
	&\leq K_{2}\mathcal{E}\{\log| \mathbf{I}_{N_B} +  \frac{P_{A}/N_{A}\hat{\mathbf{H}}_{B}\hat{\mathbf{H}}_{B}^{H}}{1 + \frac{P_{A} }{1 +  K_{1}P_{A}/N_{A}}} |\}\\
	&\quad  + N_{B}\log\bigg(\frac{(1 + \frac{ P_{A}}{1 +  K_{1}P_{A}/N_{A}})^{K_{2}}}{\big(1 + \frac{ P_{A}/N_{A}}{1 +  K_{1}P_{A}/N_{A}}e_{A}\big)^{t_{A}}}\bigg)\\
	&\triangleq\mathcal{R}_{B}^{+}
	\end{aligned}
	\end{equation}
	From \eqref{legiR1} and \eqref{bobR} we see that the difference between the upper and lower bounds on $I(\bar{\mathbf{s}}_{A};\mathbf{y}_{B}|\hat{\mathbf{h}}_B)$ is the second term in \eqref{bobR}.

	To consider $I(\bar{\mathbf{s}}_{A};\mathbf{y}_{E}|\hat{\mathbf{a}})$ in \eqref{eq:R_one}, we let $\hat{\mathbf{A}} = ivec(\hat{\mathbf{a}})$. Similar to the discussions leading to \eqref{legiR1} and \eqref{bobR}, one can verify that
\begin{equation}\label{eq:R_E_-}
  I(\bar{\mathbf{s}}_{A};\mathbf{y}_{E}|\hat{\mathbf{a}})\geq  K_{2}\mathcal{E}\{\log| \mathbf{I}_{N_E} +  \frac{P_{A}/N_{A}\hat{\mathbf{A}}\hat{\mathbf{A}}^{H}}{1 + P_{A}\sigma^{2}_{EA}} |\}
  \triangleq \mathcal{R}_{E}^{-}
\end{equation}
and
	\begin{equation}\label{eveR}
	\begin{aligned}
	&I(\bar{\mathbf{s}}_{A};\mathbf{y}_{E}|\hat{\mathbf{a}})\\
	&\leq \mathcal{R}_{E}^{-}
	+ N_{E}\log\bigg(\frac{(1 + P_{A}\sigma^{2}_{EA})^{K_{2}}}{\big(1 + (P_{A}\sigma^{2}_{EA}/N_{A})e_{A}\big)^{t_{A}}}\bigg)\\
	&\triangleq\mathcal{R}_{E}^{+}
	\end{aligned}
	\end{equation}

When $P_{A} = P_{B}  = P \rightarrow \infty$, we have  $\sigma_{EA}^{2} \rightarrow \frac{bN_{A}}{aN_{B} + bN_{A}}$, $\sigma_{B}^{2} \rightarrow 0$,  $\mathcal{E}\{\hat{a}_{i}\hat{a}_{i}^{*}\} \rightarrow \frac{aN_{B}}{aN_{B} + bN_{A}}$ and   $\mathcal{E}\{\hat{h}_{B,i}\hat{h}^{*}_{B,i}\} \rightarrow  1$. From \cite[Th.2]{Grant2002}, we know that $\mathcal{E}\{\log|\mathbf{I}_r+\frac{P}{t}\mathbf{X}\mathbf{X}^H|\}\to \min(r,t)\log P+o(\log P)$ as $P\to \infty$ where the entries of $\mathbf{X}\in \mathbb{C}^{r\times t}$ are i.i.d. $\mathcal{CN}(0,1)$. Therefore, from \eqref{legiR1} and \eqref{bobR},
	\begin{equation}\label{HighP_asym}
	\begin{aligned}
	\lim_{P\rightarrow\infty}\frac{\mathcal{R}_{B}^{-}}{\log P} = \lim_{P\rightarrow\infty}\frac{\mathcal{R}_{B}^{+}}{\log P} =  K_{2}\min\{N_{A},N_{B}\}
	\end{aligned}
	\end{equation}
And from \eqref{eq:R_E_-} and \eqref{eveR}, we have
	\begin{equation}\label{HighP_asym_e1}
	\lim_{P\rightarrow\infty}\frac{\mathcal{R}_{E}^{-}}{\log P} = 0
	\end{equation}
	and
	\begin{equation}\label{HighP_asym_e2}
	\lim_{P\rightarrow\infty}\frac{\mathcal{R}_{E}^{+}}{\log P} = \left\{
	\begin{aligned}
	& 0 ,&&K_2\leq N_{A}  \\
	&N_{E}(K_{2} - N_{A}),&&K_2>N_{A}
	\end{aligned}\right.
	\end{equation}
	Combining \eqref{HighP_asym}, \eqref{HighP_asym_e1} and \eqref{HighP_asym_e2} and using $	\mathcal{R}_{one}^{+}\triangleq\frac{1}{K_{2}}[\mathcal{R}_{B}^{+} - \mathcal{R}_{E}^{-}]^{+}$ and $	\mathcal{R}_{one}^{-}\triangleq\frac{1}{K_{2}}[\mathcal{R}_{B}^{-} - \mathcal{R}_{E}^{+}]^{+}$ (i.e., $\mathcal{R}_{one}^{-}\leq \mathcal{R}_{one}\leq \mathcal{R}_{one}^{+}$),  we have
	\begin{equation}\label{sdof_l}
	\begin{aligned}
	&\lim_{P\rightarrow \infty}\frac{\mathcal{R}_{one}^{-}}{\log P}\\
	& =
	\left\{
	\begin{aligned}
	& \min\{N_{A},N_{B}\},~K_2\leq N_{A}  \\
	&\bigg(\min\{N_{A},N_{B}\} - \frac{N_{E}}{K_2}(K_{2} - N_{A} )\bigg)^{+}, ~K_2>N_{A}
	\end{aligned}\right.
	\end{aligned}
	\end{equation}
	and
	\begin{equation}\label{sdof_u}
	\lim_{P\rightarrow \infty}\frac{\mathcal{R}_{one}^{+}}{\log P} =\min\{N_{A},N_{B}\}.
	\end{equation}
Note that $\lim_{P\rightarrow \infty}\frac{\mathcal{R}_{one}}{\log P}$ is called the secure degrees of freedom of the one-way information transmission.	
From \eqref{sdof_l} and \eqref{sdof_u}, we see that when $K_2\leq N_{A}$, we have $\lim_{P\rightarrow \infty}\frac{\mathcal{R}_{one}}{\log P} = \min\{N_{A},N_{B}\}$ which equals the degrees of freedom of the main channel capacity from Alice to Bob.  This supports and complements a conclusion from \cite{hua2018advanced} where the analysis did not use the complete statistical model of $\mathbf{H}$, $\mathbf{A}$ and $\mathbf{B}$. We also see from \eqref{sdof_l} that if $K_2>N_A$, the above lower bound on secure degrees of freedom decreases linearly as $N_E$ increases.

\subsubsection{Two-way information transmission}
Now we consider a two-way (full-duplex) communication in the second phase where the signals received by Alice, Bob and Eve in a coherence period are
\begin{equation}\label{ph2_two}
\begin{aligned}
&\mathbf{Y}_{A} = \mathbf{H}^{T}\mathbf{S}_{B} + \mathbf{N}_{A}\\
&\mathbf{Y}_{B} = \mathbf{H}\mathbf{S}_{A} + \mathbf{N}_{B}\\
&\mathbf{Y}_{E} = \sqrt{a}\mathbf{A}\mathbf{S}_{A} + \sqrt{b}\mathbf{B}\mathbf{S}_{B} + \mathbf{N}_{E}
\end{aligned}
\end{equation}
where $\mathbf{S}_{A} = [\mathbf{s}_{A}(1),\dots, \mathbf{s}_{A}(K_{2})]$ and $\mathbf{s}_{A}(t)\sim\mathcal{CN}(0, \frac{P_{A}}{N_{A}}\mathbf{I})$. Similarly $\mathbf{S}_{B} = [\mathbf{s}_{B}(1),\dots, \mathbf{s}_{B}(K_{2})]$ and $\mathbf{s}_{B}(t)\sim\mathcal{CN}(0, \frac{P_{B}}{N_{B}}\mathbf{I})$. Note that all information symbols from Alice and Bob are i.i.d..
The vectorized forms of \eqref{ph2_two} are
\begin{equation}\label{ph2_two_vec}
	\begin{aligned}
	&\mathbf{y}_{A} = (\mathbf{I}_{K_2}\otimes\mathbf{H}^{T})\bar{\mathbf{s}}_{B} + \mathbf{n}_{A}\\
	&\mathbf{y}_{B} = (\mathbf{I}_{K_2}\otimes\mathbf{H})\bar{\mathbf{s}}_{A} + \mathbf{n}_{B}\\
	&\mathbf{y}_{E} = \sqrt{a}(\mathbf{I}_{K_2}\otimes\mathbf{A})\bar{\mathbf{s}}_{A} + \sqrt{b}(\mathbf{I}_{K_2}\otimes\mathbf{B})\bar{\mathbf{s}}_{B} + \mathbf{n}_{E}
	\end{aligned}
\end{equation}
where both $\bar{\mathbf{s}}_{A}$ and $\bar{\mathbf{s}}_{B}$ are assumed to be independent of all channel parameters.
Conditional on the MMSE channel estimation in phase 1, an achievable secrecy rate in phase 2 by the two-way wiretap channel is (e.g., see \cite{Tekin2008}):
\begin{equation}\label{twoway}
\begin{aligned}
\mathcal{R}_{two} = &\frac{1}{K_{2}}\big(I(\bar{\mathbf{s}}_{B};\mathbf{y}_{A}|\hat{\mathbf{h}}_{A}) + I(\bar{\mathbf{s}}_{A};\mathbf{y}_{B}|\hat{\mathbf{h}}_{B})\\
&\quad  - I(\bar{\mathbf{s}}_{A},\bar{\mathbf{s}}_{B};\mathbf{y}_{E}|\hat{\mathbf{a}},\hat{\mathbf{b}})\big)^{+}
\end{aligned}
\end{equation}
The following analysis is similar to the previous section, for which we will only provide the key steps and results.

From \eqref{legiR1} and \eqref{bobR}, we already know a pair of lower and upper bounds on $I(\bar{\mathbf{s}}_{A};\mathbf{y}_{B}|\hat{\mathbf{h}}_{B})$. To show a similar pair of lower and upper bounds on $I(\bar{\mathbf{s}}_{B};\mathbf{y}_{A}|\hat{\mathbf{h}}_{A})$, we let
$\hat{\mathbf{H}}_{A} = ivec(\hat{\mathbf{h}}_{A})$. One can verify that
\begin{equation}\label{legiR2}
\begin{aligned}
&I(\bar{\mathbf{s}}_{B};\mathbf{y}_{A}|\hat{\mathbf{h}}_{A})\\
& \geq K_{2}\mathcal{E}\{\log|\mathbf{I}_{N_A} + \frac{P_{B}/N_{B}}{1 + \frac{\sigma^{2}P_{B}}{1 +\sigma^{2}T_{1}P_{B}/N_{B} }}\hat{\mathbf{H}}^T_{A}\hat{\mathbf{H}}^{*}_{A}|\}
\triangleq\mathcal{R}_{A}^{-}
\end{aligned}
\end{equation}
and
\begin{equation}
\begin{aligned}
I(\bar{\mathbf{s}}_{B};\mathbf{y}_{A}|\hat{\mathbf{h}}_{A}) &\leq  \mathcal{R}_{A}^{-} + N_{A}\log\bigg(\frac{(1 + \frac{P_{B}}{1 + K_{1}P_{B}/N_{B}})^{K_{2}}}{\big(1 + \frac{P_{B}/N_{B}}{1 + K_{1}P_{B}/N_{B}}e_{B} \big)^{t_{B}}}\bigg)\\
&\triangleq\mathcal{R}_{A}^{+}
\end{aligned}
\end{equation}
where $e_{B} = \exp(\frac{1}{t_{B}}\sum_{j =1}^{t_{B}}\sum_{k=1}^{r_{B} - j}\frac{1}{k} - \gamma)$, $t_{B} = \min\{N_{B},K_{2}\}$ and $r_{B} = \max\{N_{B},K_{2}\}$.
For $I(\bar{\mathbf{s}}_{A},\bar{\mathbf{s}}_{B};\mathbf{y}_{E}|\hat{\mathbf{a}},\hat{\mathbf{b}})$, we use $\hat{\mathbf{B}} = ivec(\hat{\mathbf{b}})$ (similar to $\hat{\mathbf{A}}$). One can verify that $\mathbf{K}_{\mathbf{y}_{E}|\hat{\mathbf{a}},\hat{\mathbf{b}}} = \frac{P_{A}}{N_{A}}(\mathbf{I}_{K_2}\otimes\hat{\mathbf{A}}\hat{\mathbf{A}}^{H}) + \frac{P_{B}}{N_{B}}(\mathbf{I}_{K_2}\otimes\hat{\mathbf{B}}\hat{\mathbf{B}}^{H}) + \mathbf{K}_{EA} + \mathbf{K}_{EB} + \mathbf{I}_{N_EK_2}$ where $\mathbf{K}_{EA} = \mathcal{E}\{(\mathbf{I}_{K_2}\otimes\Delta{\mathbf{A}})\bar{\mathbf{s}}_{A}
\bar{\mathbf{s}}_{A}^{H}(\mathbf{I}_{K_2}\otimes\Delta{\mathbf{A}})^{H}\} = \sigma_{EA}^{2}P_{A}\mathbf{I}_{N_EK_2}$ and  $\mathbf{K}_{EB} = \mathcal{E}\{(\mathbf{I}_{K_2}\otimes\Delta{\mathbf{B}})\bar{\mathbf{s}}_{B}
\bar{\mathbf{s}}_{B}^{H}(\mathbf{I}_{K_2}\otimes\Delta{\mathbf{B}})^{H}\} = \sigma_{EB}^{2}P_{B}\mathbf{I}_{N_EK_2}$.
Also note that $\mathbf{y}_{E} = (\mathbf{S}_{A}^{T}\otimes\mathbf{I}_{N_{E}})\mathbf{h}_{EA} + (\mathbf{S}_{B}^{T}\otimes\mathbf{I}_{N_{E}})\mathbf{h}_{EB} + \mathbf{n}_{E}$. Then,
\begin{equation}\label{eveR_two}
\begin{aligned}
&I(\bar{\mathbf{s}}_{A},\bar{\mathbf{s}}_{B};\mathbf{y}_{E}|\hat{\mathbf{a}},\hat{\mathbf{b}})\\
&= h(\mathbf{y}_{E}|\hat{\mathbf{a}},\hat{\mathbf{b}}) - h(\mathbf{y}_{E}|\hat{\mathbf{a}},\hat{\mathbf{b}},\bar{\mathbf{s}}_{A},\bar{\mathbf{s}}_{B})\\
&\leq \mathcal{E}\{\log[(\pi e)^{K_{2}N_{E}}|\mathbf{K}_{\mathbf{y}_{E}|\hat{\mathbf{a}},\hat{\mathbf{b}}}|]\}-h(\mathbf{y}_{E}|\hat{\mathbf{a}},\hat{\mathbf{b}},\bar{\mathbf{s}}_{A},\bar{\mathbf{s}}_{B})\\
& = \mathcal{E}\{\log|\mathbf{K}_{\mathbf{y}_{E}|\hat{\mathbf{a}},\hat{\mathbf{b}}}|\}- \mathcal{E}\{\log|\sigma^{2}_{EA}(\mathbf{S}_{A}^{T}\mathbf{S}_{A}^{*}\otimes \mathbf{I}_{N_E})\\
&\quad  + \sigma^{2}_{EB}(\mathbf{S}_{B}^{T}\mathbf{S}_{B}^{*}\otimes \mathbf{I}_{N_E})  +\mathbf{I}_{N_EK_2}|\}\\
& =  K_{2}\mathcal{E}\{\log|\frac{P_{A}}{N_{A}}\hat{\mathbf{A}}\hat{\mathbf{A}}^{H} +  \frac{P_{B}}{N_{B}}\hat{\mathbf{B}}\hat{\mathbf{B}}^{H} + (1 + P_{A}\sigma^{2}_{EA} \\
&\quad  + P_{B}\sigma^{2}_{EB})\mathbf{I}_{N_E}|\}\\
&\quad - N_{E}\mathcal{E}\{\log|\sigma^{2}_{EA}\mathbf{S}_{A}^{T}\mathbf{S}_{A}^{*} + \sigma^{2}_{EB}\mathbf{S}_{B}^{T}\mathbf{S}_{B}^{*}  +\mathbf{I}_{K_2}|\}
\end{aligned}
\end{equation}

Define $\mathbf{S}_{AB} = [\check{\mathbf{S}}_{A}^{T}, \check{\mathbf{S}}_{B}^{T}] \in \mathbb{C}^{K_{2} \times (N_{A} + N_{B})}$ where $\mathbf{S}_{A} = \frac{P_{A}}{N_{A}}\check{\mathbf{S}}_{A}$ and $\mathbf{S}_{B} = \frac{P_{B}}{N_{B}}\check{\mathbf{S}}_{B}$. Define $\mathbf{T} = diag\{\sigma_{EA}^{2}\frac{P_{A}}{N_{A}}\mathbf{I}_{N_{A}}, \sigma_{EB}^{2}\frac{P_{B}}{N_{B}}\mathbf{I}_{N_{B}}\}$.
Then we can rewrite the last term from  \eqref{eveR_two} as $\mathcal{E}\{\log|\sigma^{2}_{EA}\mathbf{S}_{A}^{T}\mathbf{S}_{A}^{*} + \sigma^{2}_{EB}\mathbf{S}_{B}^{T}\mathbf{S}_{B}^{*}  +\mathbf{I}_{K_2}|\} = \mathcal{E}\{\log|\mathbf{I}_{K_2} + \mathbf{S}_{AB}\mathbf{T}\mathbf{S}_{AB}^{H}|\}$.

For $K_2<N_{A} + N_{B} $, we have
\begin{equation}\label{l1}
	\begin{aligned}
	&\mathcal{E}\{\log|\mathbf{I}_{K_2} + \mathbf{S}_{AB}\mathbf{T}\mathbf{S}_{AB}^{H}|\}\\
	&\geq K_{2}\mathcal{E}\{\log(1 + |\mathbf{S}_{AB}\mathbf{T}\mathbf{S}_{AB}^{H}|^{\frac{1}{K_{2}}})\}\\
	& = K_{2}\mathcal{E}\big\{\log\big(1 + \exp\big(\frac{1}{K_{2}}\ln|\mathbf{S}_{AB}\mathbf{T}\mathbf{S}_{AB}^{H}|\big)\big)\big\}\\
	&\geq K_{2}\mathcal{E}\big\{\log\big(1 +
	\exp\big(\frac{1}{K_{2}}\ln\sigma^{2K_{2}}_{min}|\mathbf{S}_{AB}\mathbf{S}_{AB}^{H}|\big)\big)\big\}\\
	& \geq K_{2}\log\big(1 + \sigma^{2}_{min}e_{E1}\big)
	\end{aligned}
\end{equation}
where $e_{E1} = \exp(\frac{1}{K_{2}}\sum_{j =1}^{K_{2}}\sum_{k=1}^{N_{A}+N_{B} - j}\frac{1}{k} - \gamma)$. The second inequality in \eqref{l1} is from the fact (e.g., see \cite[Th. 3]{Jin2007}) that $|\mathbf{S}_{AB}\mathbf{T}\mathbf{S}_{AB}^{H}| \geq \sigma^{2K_{2}}_{min}|\mathbf{S}_{AB}\mathbf{S}_{AB}^{H}|$ where $\sigma^{2}_{min} = \min\{\sigma_{EA}^{2}\frac{P_{A}}{N_{A}}, \sigma_{EB}^{2}\frac{P_{B}}{N_{B}}\}$.
Similarly, for $K_2\geq N_{A} + N_{B}$, we have
\begin{equation}\label{l2}
	\begin{aligned}
	&\mathcal{E}\{\log|\mathbf{I} + \mathbf{S}_{AB}\mathbf{T}\mathbf{S}_{AB}^{H}|\}\\
	& = \mathcal{E}\{\log|\mathbf{I} + \mathbf{T}\mathbf{S}_{AB}^{H}\mathbf{S}_{AB}|\}\\
	&\geq (N_{A} + N_{B})\mathcal{E}\big\{\log\big(1 \\
&\quad + |\mathbf{T}|^{\frac{1}{N_{A} + N_{B}}}exp\big(\frac{1}{N_{A} + N_{B}}\ln|\mathbf{S}_{AB}^{H}\mathbf{S}_{AB}|\big)\big)\big\}\\
	&\geq (N_{A} + N_{B})\log\big(1 + |\mathbf{T}|^{\frac{1}{N_{A} + N_{B}}}e_{E2}\big)
	\end{aligned}
\end{equation}
where $e_{E2} = \exp(\frac{1}{N_{A} + N_{B}}\sum_{j =1}^{N_{A} + N_{B}}\sum_{k=1}^{K_{2} - j}\frac{1}{k} - \gamma)$
Therefore, using \eqref{l1} and \eqref{l2}, we have from \eqref{eveR_two} that
\begin{equation}\label{eveR_up}
\begin{aligned}
&I(\bar{\mathbf{s}}_{A},\bar{\mathbf{s}}_{B};\mathbf{y}_{E}|\hat{\mathbf{a}},\hat{\mathbf{b}})\\
&\leq K_{2}\mathcal{E}\{\log|\frac{\frac{P_{A}}{N_{A}}\hat{\mathbf{A}}\hat{\mathbf{A}}^{H} +  \frac{P_{B}}{N_{B}}\hat{\mathbf{B}}\hat{\mathbf{B}}^{H}}{1 + P_{A}\sigma^{2}_{EA} + P_{B}\sigma^{2}_{EB}} + \mathbf{I}|\} \\
&{\small + \left\{
\begin{aligned}
&K_{2}N_{E}\log\bigg(\frac{1 + P_{A}\sigma^{2}_{EA} + P_{B}\sigma^{2}_{EB}}{1 + \sigma^{2}_{min}e_{E1}}\bigg),~K_2\leq N_{A} + N_{B}\\
&N_{E}\log\bigg(\frac{(1 + P_{A}\sigma^{2}_{EA} + P_{B}\sigma^{2}_{EB})^{K_{2}}}{\big(1 + |\mathbf{T}|^{\frac{1}{N_{A} + N_{B}}}e_{E2}\big)^{N_{A} + N_{B}}}\bigg), ~K_2>N_{A} + N_{B}
\end{aligned}  \right.}\\
&\triangleq\mathcal{R}_{E,t}^{+}
\end{aligned}
\end{equation}

One can also verify $I(\bar{\mathbf{s}}_{A},\bar{\mathbf{s}}_{B};\mathbf{y}_{E}|\hat{\mathbf{a}},\hat{\mathbf{b}}) \geq K_{2}\mathcal{E}\{\log|\frac{\frac{P_{A}}{N_{A}}\hat{\mathbf{A}}\hat{\mathbf{A}}^{H} +  \frac{P_{B}}{N_{B}}\hat{\mathbf{B}}\hat{\mathbf{B}}^{H}}{1 + P_{A}\sigma^{2}_{EA} + P_{B}\sigma^{2}_{EB}} + \mathbf{I}|\}\triangleq  \mathcal{R}_{E,t}^{-} $ which is the first term in \eqref{eveR_up}.
When $P_{A} = P_{B}  = P \rightarrow \infty$, we have  $\sigma_{EA}^{2} \rightarrow \frac{bN_{A}}{aN_{B} + bN_{A}}$, $\sigma_{EB}^{2} \rightarrow \frac{aN_{B}}{aN_{B} + bN_{A}}$,  $\sigma_{A}^{2} \rightarrow 0$, $\sigma_{B}^{2} \rightarrow 0$,   $\mathcal{E}\{\hat{a}_{i}\hat{a}_{i}^{*}\} \rightarrow \frac{aN_{B}}{aN_{B} + bN_{A}}$,  $\mathcal{E}\{\hat{b}_{i}\hat{b}_{i}^{*}\} \rightarrow \frac{bN_{A}}{aN_{B} + bN_{A}}$, $\mathcal{E}\{\hat{h}_{A,i}\hat{h}^{*}_{A,i}\} \rightarrow  1$,  $\mathcal{E}\{\hat{h}_{B,i}\hat{h}^{*}_{B,i}\} \rightarrow  1$, $\sigma^{2}_{min} = P\min\{\frac{\sigma_{EA}^{2}}{N_{A}}, \frac{\sigma_{EB}^{2}}{N_{B}}\}$ and $|\mathbf{T}|^{\frac{1}{N_{A} + N_{B}}} = P((\frac{\sigma_{EA}^{2}}{N_{A}})^{N_{A}}(\frac{\sigma_{EB}^{2}}{N_{B}})^{N_{B}})^{1/(N_{A}+N_{B})}$.
Then, similar to \eqref{HighP_asym}, we have
\begin{equation}\label{HighP_asym2}
	\lim_{P\rightarrow\infty}\frac{\mathcal{R}_{A}^{-}}{\log P} = \lim_{P\rightarrow\infty}\frac{\mathcal{R}_{A}^{+}}{\log P} =  K_{2}\min\{N_{A},N_{B}\}
\end{equation}
One can also verify that
\begin{equation}\label{HighP_asym_e3}
\lim_{P\rightarrow\infty}\frac{\mathcal{R}_{E,t}^{-}}{\log P} = 0
\end{equation}
and
\begin{equation}\label{HighP_asym_e4}
\lim_{P\rightarrow\infty}\frac{\mathcal{R}_{E,t}^{+}}{\log P} = \left\{
\begin{aligned}
& 0 ,&&K_2\leq N_{A}+N_{B}  \\
&N_{E}(K_{2} - N_{A}-N_{B}),&&K_2>N_{A}+N_{B}
\end{aligned}\right.
\end{equation}
Now applying \eqref{HighP_asym}, \eqref{HighP_asym2}, \eqref{HighP_asym_e3} and \eqref{HighP_asym_e4}, and using $	\mathcal{R}_{two}^{+}\triangleq\frac{1}{K_{2}}[\mathcal{R}_{A}^{+}+\mathcal{R}_{B}^{+} - \mathcal{R}_{E,t}^{-}]^{+}$ and $	\mathcal{R}_{two}^{-}\triangleq\frac{1}{K_{2}}[\mathcal{R}_{A}^{-}+\mathcal{R}_{B}^{-} - \mathcal{R}_{E,t}^{+}]^{+}$ as upper and lower bounds on $\mathcal{R}_{two}$, we have
\begin{equation}
\begin{aligned}
&\lim_{P\rightarrow \infty}\frac{\mathcal{R}_{two}^{-}}{\log P}\\
& =
\left\{
\begin{aligned}
& 2\min\{N_{A},N_{B}\},~K_2\leq N_{A} + N_{B} \\
&\bigg(2\min\{N_{A},N_{B}\} - \frac{N_{E}}{K_2}(K_{2} - N_{A} - N_{B})\bigg)^{+}\\
&\qquad\qquad\qquad\qquad,~K_2>N_{A} + N_{B}
\end{aligned}\right.
\end{aligned}
\end{equation}
and
\begin{equation}
\lim_{P\rightarrow \infty}\frac{\mathcal{R}_{two}^{+}}{\log P} =2\min\{N_{A},N_{B}\}
\end{equation}
We see that if $K_2\leq N_{A} + N_{B}$, $\lim_{P\rightarrow \infty}\frac{\mathcal{R}_{two}}{\log P} = 2\min\{N_{A},N_{B}\}$ which equals  the degrees of freedom of the full-duplex channel between Alice and Bob. And if $K_2>N_{A} + N_{B}$, the above lower bound on $\lim_{P\rightarrow \infty}\frac{\mathcal{R}_{two}}{\log P} $ decreases linearly as $N_E$ increases. We see an advantage of two-way information transmission over one-way information transmission.

\subsection{Eve uses blind detection with zero knowledge of its CSI}
\label{sec:long_anece}
Now we reconsider the case of one-way information transmission from Alice to Bob in the second phase but assume that Eve performs a blind detection of the information transmitted from Alice. For the blind detection shown next, we also assume that $K_2> N_A$ and Eve's knowledge of its CSI matrix $\sqrt{a}\mathbf{A}\in \mathbb{C}^{N_E\times K_2}$ is zero. (The two-way information transmission between Alice and Bob in either half-duplex or ideal full-duplex can be treated similarly. For the case of $K_2\leq N_A$, Eve cannot receive any information from the users due to its unknown CSI.)

The signal received by Eve during information transmission from Alice over $K_2$ sampling intervals is
\begin{equation}
    \mathbf{Y}_E = \sqrt{a}\mathbf{A}\mathbf{S}_A+\mathbf{N_E}
\end{equation}
where the elements in $\mathbf{S}_A\in \mathbb{C}^{N_A\times K_2}$ are assumed to be independently chosen from a known constellation $\mathbb{S}_N$ with size $N$.
Assume that Eve performs the blind detection as follows:
\begin{equation}
    \left(\hat{\mathbf{S}}, \hat{\mathbf{A}}\right) = \argmin_{\mathbf{S} \in \mathbb{S}_N^{N_A\times K_2}, \sqrt{a}\mathbf{A}\in \mathbb{C}^{N_E\times K_2}} \|\mathbf{Y}_E - \sqrt{a}\mathbf{A}\mathbf{S}\|^2_F.
\end{equation}
Given any $\mathbf{S}$, the optimal $\sqrt{a}\mathbf{A}$ is  $\mathbf{Y}_E \mathbf{S}^H\left (\mathbf{S}\mathbf{S}^H\right)^{-1}$. Then, the above problem reduces to the following (an issue of uniqueness will be addressed later)
\begin{equation}\label{eq:P0}
    \hat{\mathbf{S}} = \argmin_{\mathbf{S} \in \mathbb{S}_N^{N_A\times K_2}} \|\mathbf{Y}_E - \mathbf{Y}_E \mathbf{S}^H\left (\mathbf{S}\mathbf{S}^H\right)^{-1}\mathbf{S}\|^2_F,
\end{equation}
or equivalently
$
    \hat{\mathbf{S}} = \argmax_{\mathbf{S} \in \mathbb{S}_N^{N_A\times K_2}} f\left (\mathbf{s}\right),
$
where $f\left (\mathbf{s}\right) = \textrm{Tr}\left ( \mathbf{S}^H\left (\mathbf{S}\mathbf{S}^H\right)^{-1}\mathbf{S}\mathbf{Z}\right)$, $\mathbf{s} = \textrm{vec}\left (\mathbf{S}\right)$ and $\mathbf{Z} = \mathbf{Y}_E^H\mathbf{Y}_E$. The above problem is computationally expensive. But we assume that Eve is able to afford it.

Assume that  the solution $\hat{\mathbf{S}}$ of the above problem  is so close to the actual information matrix  $\mathbf{S}_0$ that $f\left (\mathbf{s}\right)$ can be replaced by its 2nd-order Taylor's series expansion (which is conservative for Alice and Bob or equivalently optimistic for Eve). Then $\hat{\mathbf{s}} = \textrm{vec}\left (\hat{\mathbf{S}}\right)$ has the following properties
\begin{equation}
\label{eq:nabla_f}
    \nabla_{\mathbf{s}}\tilde{f}\left (\mathbf{s}\right)|_{\mathbf{s} = \hat{\mathbf{s}}} = \mathbf{0},
\end{equation}
\begin{equation}
\label{eq:nabla_f*}
    \nabla_{\mathbf{s}^*}\tilde{f}\left (\mathbf{s}\right) \left|_{\mathbf{s}= \hat{\mathbf{s}}}\right.= \mathbf{0} ,
\end{equation}
where $\tilde{f}\left (\mathbf{s}\right)$ is the second-order Taylor series expansion \cite{kreutz2009complex} of $f\left (\mathbf{s}\right)$ around $\mathbf{s}_0 = \textrm{vec}\left (\mathbf{S}_0\right)$, i.e.,
\begin{align}
    \tilde{f}\left (\mathbf{s}\right)&=\notag\\& f\left (\mathbf{s}_0\right)+ \nabla_{\mathbf{s}}^Tf\left (\mathbf{s}\right)|_{\mathbf{s}=\mathbf{s}_0}\left (\mathbf{s}-\mathbf{s}_0\right)+\nabla_{\mathbf{s}^*}^Tf\left (\mathbf{s}\right)|_{\mathbf{s}=\mathbf{s}_0}\left (\mathbf{s}-\mathbf{s}_0\right)^*\notag \\& +\frac{1}{2}\left [ \left (\mathbf{s}-\mathbf{s}_0\right)^H \mathbf{H}_{ss}\left (\mathbf{s}-\mathbf{s}_0\right)+\left (\mathbf{s}-\mathbf{s}_0\right)^H \mathbf{H}_{s^*s}\left (\mathbf{s}-\mathbf{s}_0\right)^*\right.\notag\\& \left. +\left (\mathbf{s}-\mathbf{s}_0\right)^T \mathbf{H}_{ss^*}\left (\mathbf{s}-\mathbf{s}_0\right)+\left (\mathbf{s}-\mathbf{s}_0\right)^T \mathbf{H}_{s^*s^*}\left (\mathbf{s}-\mathbf{s}_0\right)^*\right ].
\end{align}
which involves the Hessian matrices: $\mathbf{H}_{ss} = \frac{\partial}{\partial \mathbf{s}}\left (\nabla_{\mathbf{s}^*} f\right)\left|_{\mathbf{s}= {\mathbf{s}}_0}\right.$, $\mathbf{H}_{s^*s} = \frac{\partial}{\partial \mathbf{s}^*}\left (\nabla_{\mathbf{s}^*} f\right)\left|_{\mathbf{s}= {\mathbf{s}}_0}\right.$, $\mathbf{H}_{ss^*} = \frac{\partial}{\partial \mathbf{s}}\left (\nabla_{\mathbf{s}} f\right)\left|_{\mathbf{s}= {\mathbf{s}}_0}\right.$, $\mathbf{H}_{s^*s^*} = \frac{\partial}{\partial \mathbf{s}^*}\left (\nabla_{\mathbf{s}} f\right)\left|_{\mathbf{s}= {\mathbf{s}}_0}\right.$. Subject to uniqueness of solution, solving   (\ref{eq:nabla_f}) and   (\ref{eq:nabla_f*}) results in the following \cite{kreutz2009complex}
\begin{equation}
    \Hat{\mathbf{s}}-\mathbf{s}_0 = \left (\mathbf{H}_{ss}-\mathbf{H}_{s^*s}\mathbf{H}_{ss}^{-T}\mathbf{H}_{s^*s}^H\right)^{-1}\left (\mathbf{H}_{s^*s}\mathbf{H}_{ss}^{-T}\nabla_{\mathbf{s}} f-\nabla_{\mathbf{s}^*} f\right).
\end{equation}
Furthermore,
\begin{equation}
    \nabla_{\mathbf{s}^*} f = \left (\mathbf{Z}\left (\mathbf{I}-\mathbf{S}^H\left (\mathbf{S}\mathbf{S}^H\right)^{-1}\mathbf{S}\right)\right)^T\otimes \left (\mathbf{S}\mathbf{S}^H\right)^{-1}\mathbf{s},
\end{equation}
\begin{equation}
    \nabla_{\mathbf{s}} f = \left (\nabla_{\mathbf{s}} f\right)^*,
\end{equation}
\begin{align}
    &\mathbf{H}_{ss}  =\notag \\&\left[\mathbf{Z}-\mathbf{Z}\mathbf{S}^H\left (\mathbf{S}\mathbf{S}^H\right)^{-1}\mathbf{S} - \mathbf{S}^H\left (\mathbf{S}\mathbf{S}^H\right)^{-1}\mathbf{S}\mathbf{Z}\right. \notag\\& \left. + \mathbf{S}^H\left (\mathbf{S}\mathbf{S}^H\right)^{-1}\mathbf{S}\mathbf{Z}\mathbf{S}^H\left (\mathbf{S}\mathbf{S}^H\right)^{-1}\mathbf{S} \right]^T\otimes \left (\mathbf{S}\mathbf{S}^H\right)^{-1} \notag \\ &+ \left (\mathbf{S}^H\left (\mathbf{S}\mathbf{S}^H\right)^{-1}\mathbf{S}-\mathbf{I}\right)^T \otimes \mathbf{S}^H\left (\mathbf{S}\mathbf{S}^H\right)^{-1}\mathbf{S}\mathbf{Z}\mathbf{S}^H\left (\mathbf{S}\mathbf{S}^H\right)^{-1}\mathbf{S},
\end{align}
\begin{align}
    &\mathbf{H}_{s^*s}  =\notag \\& \left [\left (\left (\left (\mathbf{S}\mathbf{S}^H\right)^{-1}\mathbf{S}\mathbf{Z}\right)\left (\mathbf{S}^H\left (\mathbf{S}\mathbf{S}^H\right)^{-1}\mathbf{S}-\mathbf{I}\right)\right)^T \otimes \left (\mathbf{S}\mathbf{S}^H\right)^{-1}\mathbf{S}+ \right. \notag \\ & \left.
    \left [\left (\mathbf{S}\mathbf{S}^H\right)^{-1}\mathbf{S}\right]^T \otimes \left (\left (\mathbf{S}\mathbf{S}^H\right)^{-1}\mathbf{S}\mathbf{Z}\right)\left (\mathbf{S}^H\left (\mathbf{S}\mathbf{S}^H\right)^{-1}\mathbf{S}-\mathbf{I}\right)\right ]\mathbf{\Pi}
\end{align}
where $\mathbf{\Pi}$ is a permutation matrix with
\begin{equation}
    \mathbf{\Pi}_{i,j} =\begin{cases} 1 & j = \left (\left (i-1\right)_{\bmod N_A}\right)K_2 + \left \lfloor \left (i-1\right)/N_A \right \rfloor \\ 0 & else \end{cases}
\end{equation}
where $a_{\bmod b}$ denotes the remainder of the division of $a$ by $b$. For more details about complex derivatives, please refer to \cite{kreutz2009complex}.

Because of the blind nature, $\mathbf{H}_{ss}$ is always rank deficient by $N_A^2$. To remove the ambiguity, we can treat the first $N_A$ of the transmitted vectors from Alice as known, which is equivalent to removing  $N_A^2$ corresponding rows and $N_A^2$ corresponding columns from each of $\mathbf{H}_{ss}$ and $\mathbf{H}_{s^*s}$, and  removing $N_A^2$ corresponding elements from each of $ \nabla_{\mathbf{s}} f$ and $ \nabla_{\mathbf{s}^*} f$. This results in $\bar{\mathbf{H}}_{ss}$, $\bar{\mathbf{H}}_{s^*s}$, $ \bar{\nabla}_{\mathbf{s}} f$ and $ \bar{\nabla}_{\mathbf{s}^*} f$, respectively. Hence the MSE matrix $\bar{\mathbf{M}}$ of the remaining unknown parameters can be formed as
\begin{equation}\label{eq:barM}
    \bar{\mathbf{M}} = \E \left [ \left (\overline{\hat{\mathbf{s}}-\mathbf{s}_0}\right)\left (\overline{\hat{\mathbf{s}}-\mathbf{s}_0}\right)^H\right ],
\end{equation}
where $\overline{\hat{\mathbf{s}}-\mathbf{s}_0}$ is the approximation of errors in the vector of all $N_A(K_2-N_A)$ remaining symbols and
\begin{equation}\label{eq:s-s}
    \overline{\Hat{\mathbf{s}}-\mathbf{s}_0} = \left (\bar{\mathbf{H}}_{ss}-\bar{\mathbf{H}}_{s^*s}\bar{\mathbf{H}}_{ss}^{-T}\bar{\mathbf{H}}_{s^*s}^H\right)^{-1}\left (\bar{\mathbf{H}}_{s^*s}\bar{\mathbf{H}}_{ss}^{-T}\bar{\nabla}_{\mathbf{s}} f-\bar{\nabla}_{\mathbf{s}^*} f\right)
\end{equation}
Finally, for $K_2>N_A$, Eve's effective rate (with the information in the first $N_A$ vectors of $\mathbf{s}(k)$ removed) can be approximated as
\begin{equation}\label{eq:RAE2}
    R_{AE}^{(2)} = \frac{1}{K_2}\left (\log |\bar{\mathbf{Q}}| - \log |\bar{\mathbf{M}}|\right),
\end{equation}
where $\bar{\mathbf{Q}}$ is the covariance matrix of the vector of all remaining symbols.

To evaluate $R_{AE}^{(2)}$, one has to specify the actual constellation $\mathbb{S}_N$ of each symbol in $\mathbf{S}$, compute $\overline{\Hat{\mathbf{s}}-\mathbf{s}_0}$ for each actual realization of $\mathbf{S}_0$ according to \eqref{eq:s-s}, and then obtain a sample averaged version of $\bar{\mathbf{M}}$ in \eqref{eq:barM}. Each of the realizations of $\mathbf{S}_0$ should be coupled with an independent realization of the channel matrix $\mathbf{A}$ and the noise matrix $\mathbf{N}_E$. With the final sample-averaged versions of $\bar{\mathbf{Q}}$ and $\bar{\mathbf{M}}$, $R_{AE}^{(2)}$ in \eqref{eq:RAE2} can be obtained.

For the next two plots, we assume that $\mathbb{S}_N$ is 4-QAM\footnote{For higher order constellations, the simulation became too slow and consuming.}, 100 random realizations of $\mathbf{S}_0$, $\mathbf{A}$ and $\mathbf{N}_E$ are used in computing $R_{AE}^{(2)}$. Also, during information transmission from Alice to Bob, $P_A=30dB$ (and $P_B=0$). In this case, due to high power, we expect the Taylor's series expansion applied in our derivation is accurate.

Fig. \ref{fig:conv_anace_blind} shows $R_{AE}^{(2)}$ versus $K_2/N_A$ where $N_A=N_B=4$ and $N_E=8$. We see that only when $K_2$ becomes much larger than $N_A$, $R_{AE}^{(2)}$ approaches $R_{AE}$. Note that $R_{AE}^{(2)}$ is based on unknown CSI at Eve and blind detection at Eve while $R_{AE}$ is based on the assumption that Eve knows its CSI perfectly.

\begin{figure}[b!]
    \centering
    \includegraphics[width=1\linewidth]{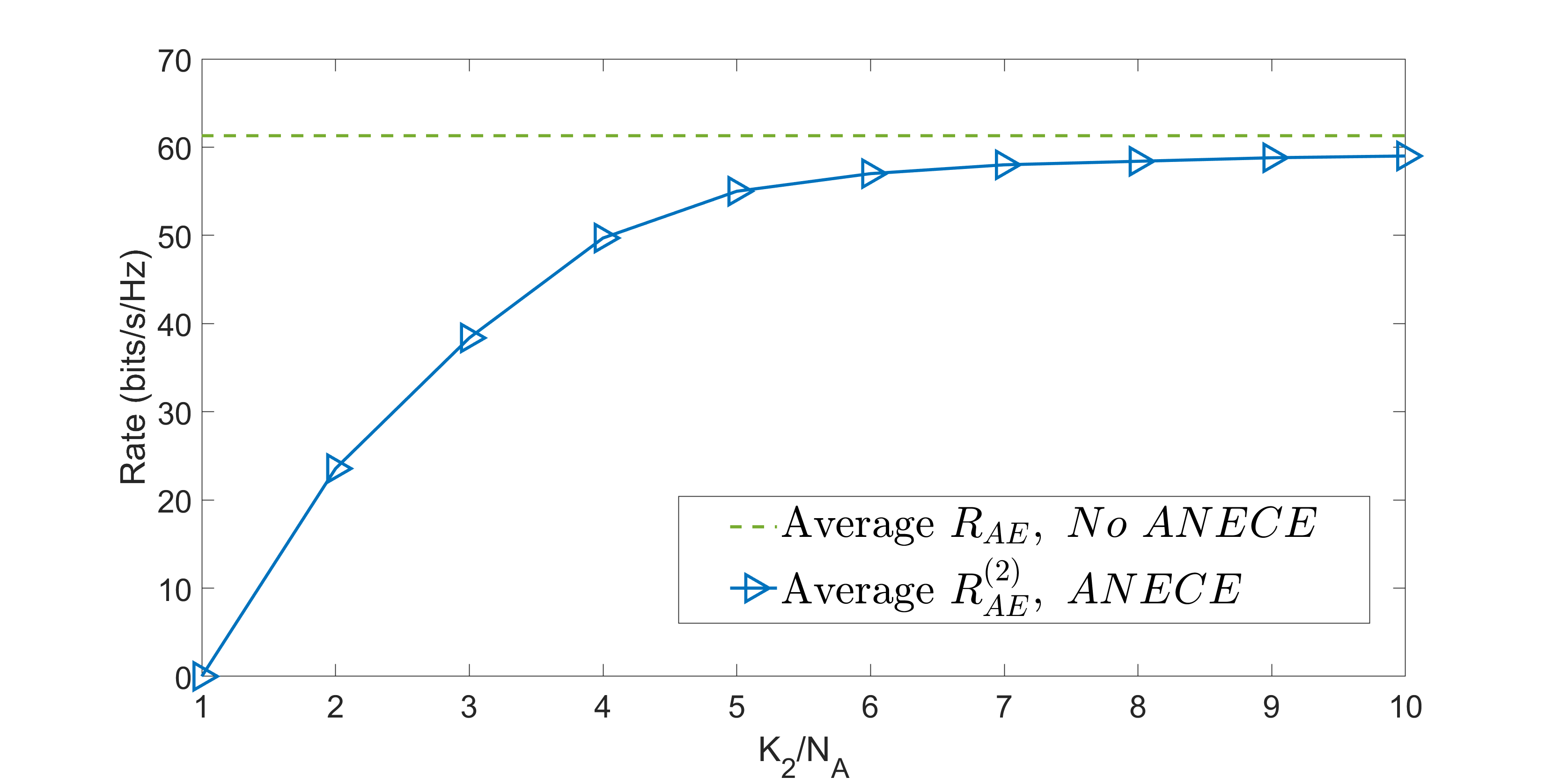}
    \caption{Eve's rates vs $K_2/N_A$ for known or unknown CSI at Eve. With ANECE, Eve does not know its CSI. Otherwise, Eve does.}
    \label{fig:conv_anace_blind}
\end{figure}

Fig. \ref{fig:comp_anace_blind} shows the averaged secret rate  $\bar R_S=(\mathcal{E}[R_{AB}-R_{AE}^{(2)}])^+$ versus $N_E$ where $N_A=N_B=4$. (The curves in this figure were zoomed in for the range of $N_E$ from $4$ to $20$. The actually computed points were at $N_E=4, 8, 16,32$.) In this case, $(\mathcal{E}[R_{AB}-R_{AE}])^+$ is zero for all values of $N_E$. But when Eve is blind to its CSI (caused by ANECE), the secrecy rates become substantial. In this case, we also see that for given $K_2>N_A$ the secrecy rate decreases as the number of antennas on Eve increases.

The above results in this subsection complement the analytical insights shown in the previous subsection (e.g., see \eqref{sdof_l}). Due to different assumptions, we cannot make a precise comparison between \eqref{sdof_l} and Figs. \ref{fig:conv_anace_blind} and \ref{fig:comp_anace_blind} while the general trends predicted in both cases are somewhat consistent. An additional discussion of the blind detection where Eve uses a partial knowledge of its CSI from phase 1 is shown in Appendix \ref{app2}.

\begin{figure}[bt!]
    \centering
    \includegraphics[width=1\linewidth]{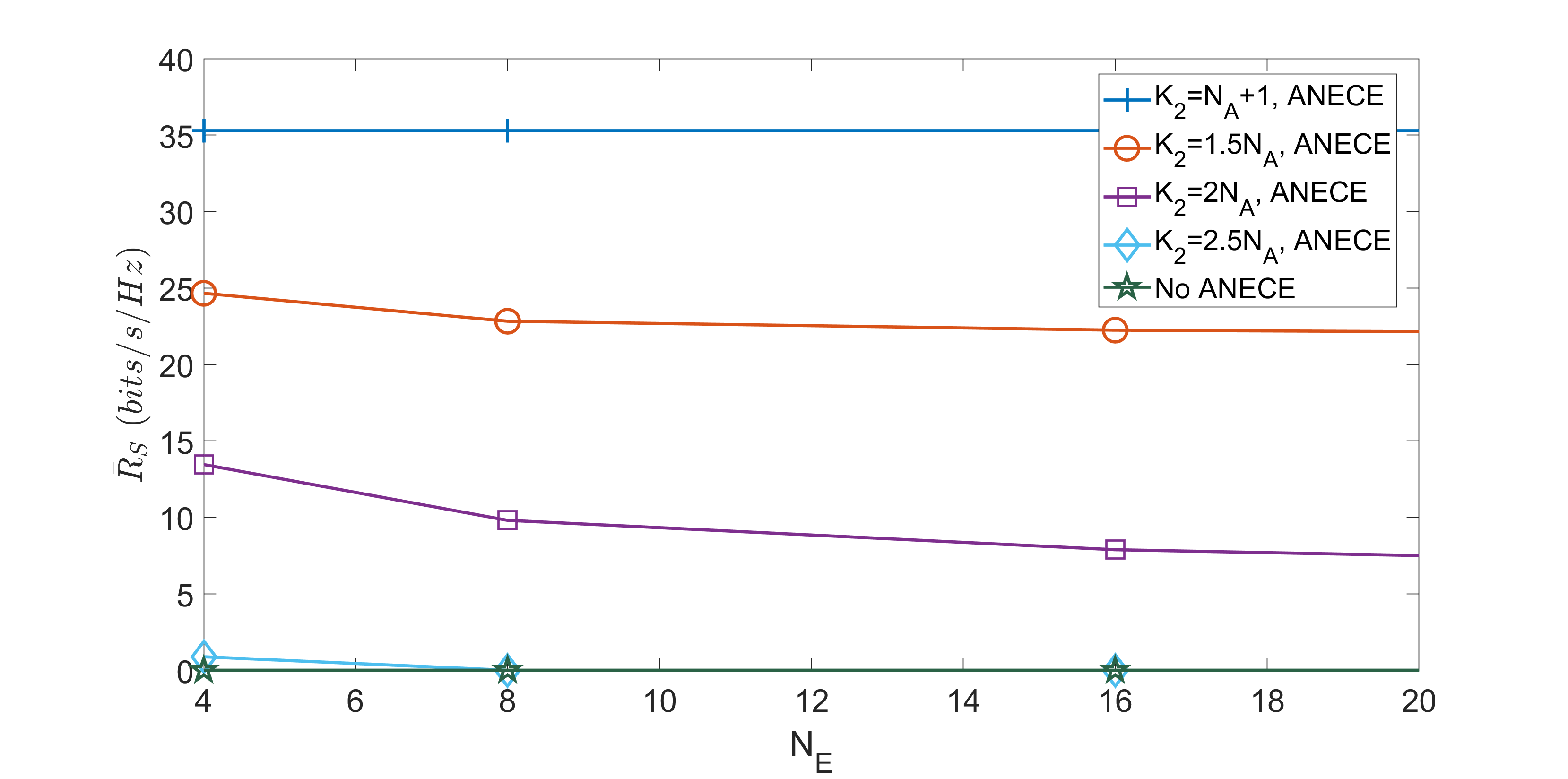}
    \caption{$\bar R_S$ vs. $N_E$ for known or unknown CSI at Eve. With ANECE, Eve does not know its CSI. Otherwise, Eve does.}
    \label{fig:comp_anace_blind}
\end{figure}

\section{Conclusion}
\label{sec:conclusion}
In this paper, we have investigated the secrecy performance of a full-duplex MIMOME network in some important scenarios. In the first part of this paper, we studied how to optimize the jamming powers from both Alice and Bob when Eve's CSI is unknown to Alice and Bob but Eve knows all CSI. To handle Eve's CSI being unknown to Alice and Bob, we focused on Eve at the most harmful location and adopted the large matrix theory that yields a hardened secret rate for any large number of antennas on Eve. With the optimized powers, we revealed a significant improvement in terms of the maximum tolerable number of antennas on Eve.  In the second part of this paper, we analyzed the full-duplex MIMOME network subject to the application of anti-eavesdropping channel estimation (ANECE) in a two-phase scheme. Assuming that a statistical model of CSI anywhere is known everywhere, we derived lower and upper bounds on the secure degrees of freedom of  the network, which reveal clearly how the number of antennas on Eve affect these bounds. In particular, for $1\leq K_2\leq N_A$ in one-way information transmission or $1\leq K_2\leq N_A+N_B$ in two-way information transmission, the lower and upper bounds coincide and equal to those of the channel capacity between Alice and Bob. Furthermore, assuming that Eve does not have any prior knowledge of its CSI but uses blind detection in phase 2 of the two-phase scheme, we provided and illustrated an approximate secrecy rate for $K_2>N_A$ in one-way information transmission. But the exact secrecy rate of the full-duplex MIMOME network with ANECE for $K_2$ larger than the total number of transmitting antennas still remains elusive. Nevertheless, the contributions shown in this paper are significant additions to our previous works shown in \cite{hua2018advanced} and \cite{sohrabi_secrecy}, which expands the understanding of full-duplex radio for wireless network security.

\begin{appendices}
\section{Proof of Lemma 1}
\label{sec:append1}
The following proof is a simple digest from
\cite{tulino2004random} that is useful to help readers to understand Lemma 1 more easily. The Shannon transform of the distribution of a random variable $X$ with parameter $\gamma$ is defined as
\begin{align}
    \mathcal{V}_X (\gamma) = \E_X [\log(1+\gamma X)],
\end{align}
and the $\eta$- transform of the distribution of $X$ with parameter $\gamma$ is defined as
\begin{align}
    \eta_X (\gamma) = \E_X [\frac{1}{1+\gamma X}],
\end{align}
where $\gamma\geq 0$.
The empirical cumulative distribution function of the eigenvalues of an $n \times n$ random non-negative-definite Hermitian matrix $\mathbf{A}$ is defined as
\begin{align}
    F^n_{\mathbf{A}} (x) = \frac{1}{n}\sum_{i=1}^n 1\{\lambda_i(\mathbf{A})\leq x\}
\end{align}
where $\lambda_1(\mathbf{A}), \dots, \lambda_n(\mathbf{A})$ are the eigenvalues of $\mathbf{A}$, and $1\{.\}$ is the indicator function. When $F^n_{\mathbf{A}}(x)$ converges as $n\rightarrow\infty$, the corresponding limit is denoted by $F_{\mathbf{A}}(x)$.

It is obvious that
\begin{align}
       \frac{1}{n} \log |\mathbf{I}+\gamma \mathbf{A}| &= \frac{1}{n} \sum_{i=1}^n \log (1+\gamma \lambda_i (\mathbf{A}))\notag\\& = \int_0^{\infty} \log (1+\gamma x) d F_{\mathbf{A}}^n(x),
\end{align}
and if $n\rightarrow \infty$ then
\begin{align}
       \frac{1}{n} \log |\mathbf{I}+\gamma \mathbf{A}| \rightarrow \int_0^{\infty} \log (1+\gamma x) d F_{\mathbf{A}}(x)
\end{align}
which is the Shannon transform of the eigenvalue distribution  of  the matrix $\mathbf{A}$ when $n$ is large.

The Shannon transform of the eigenvalue distribution of $\mathbf{\Theta}$ with parameter $\eta$ is obviously given by \eqref{eq:shannon}. And the $\eta$-transform of the eigenvalue distribution of $\mathbf{\Theta}$ with parameter $x$ is obviously given by
\begin{align}
\label{eq:eta_theta}
     \eta_{\mathbf{\Theta}}(x)=\frac{1}{L_{\mathbf{\Theta}}}
    \sum_{j=1}^{L_{\mathbf{\Theta}}} \frac{1}{1+x\mathbf{\Theta}_{j,j}}.
\end{align}

From Theorem 2.39 in \cite{tulino2004random}, the $\eta$-transform of the eigenvalue distribution of  $ \mathbf{J}\mathbf{\Theta}\mathbf{J}^H$ with parameter $\gamma$, denoted by $\eta$ here, satisfies
\begin{align}
    \beta = \frac{1-\eta}{1-\eta_{\mathbf{\Theta}}(\gamma\eta)}.
\end{align}
Applying \eqref{eq:eta_theta} to the above equation with $\gamma=1$ yields
\begin{align}
    1-\eta &=\beta \left(1-\frac{1}{L_{\mathbf{\Theta}}}
    \sum_{j=1}^{L_{\mathbf{\Theta}}} \frac{1}{1+\eta\mathbf{\Theta}_{j,j}}\right)
\end{align}
which reduces to \eqref{eq:eq_eta}. Also from Theorem 2.39 in \cite{tulino2004random}, the Shannon transform of the eigenvalue distribution of $\mathbf{J}\mathbf{\Theta}\mathbf{J}^H$ with parameter $\gamma=1$ is
\begin{align}
    \mathcal{V}_{\mathbf{J}\mathbf{\Theta}\mathbf{J}^H}(1)= \beta  \mathcal {V}_{\mathbf{\Theta}}(\eta )-\log \left (\eta\right) +\left (\eta -1\right)\log\left (e\right)
\end{align}
 which is $\Omega(\beta, \mathbf{\Theta}, \eta)$ in  \eqref{eq:omega}.\begin{flushright}\qedsymbol\end{flushright}

\section{Eve uses blind detection with partial knowledge of its CSI}\label{app2}
Now we consider the case where Eve can use its signal in phase 1 to obtain its CSI up to a subspace ambiguity, i.e., in the absence of noise, Eve can obtain from $\mathbf{Y}_E$ as in \eqref{ph1.e} the following:
\begin{equation}\label{}
  \mathbf{\hat A}=\mathbf{A}+\boldsymbol{\Theta}\mathbf{C}_A
\end{equation}
\begin{equation}\label{}
  \mathbf{\hat B}=\mathbf{B}+\boldsymbol{\Theta}\mathbf{C}_B
\end{equation}
where $[\mathbf{C}_A,\mathbf{C}_B]\in \mathbb{C}^{\min\{N_A,N_B\}\times (N_A+N_B)}$ is a known matrix satisfying $[\mathbf{C}_A,\mathbf{C}_B][\mathbf{P}_A^T,\mathbf{P}_B^T]^T=0$. For convenience and without loss of generality, we assume here $a=b=1$.

With one-way information transmission from Alice to Bob in phase 2, Eve can now perform a constrained blind detection as follows:
\begin{equation}\label{}
  \min_{\mathbf{S}\in \mathbb{S}_N^{N_A\times K_2},\mathbf{A}|\mathbf{\hat A}=\mathbf{A}+\boldsymbol{\Theta}\mathbf{C}_A}\|\mathbf{Y}_E-\mathbf{A}\mathbf{S}\|^2
\end{equation}
or equivalently
\begin{equation}\label{eq:P1}
  \min_{\mathbf{S}\in \mathbb{S}_N^{N_A\times K_2},\boldsymbol{\Theta}}\|\mathbf{Y}_E-(\mathbf{\hat A}-\boldsymbol{\Theta}\mathbf{C}_A)\mathbf{S}\|^2.
\end{equation}
For any given $\mathbf{S}$, the solution for $\boldsymbol{\Theta}$ is
\begin{equation}\label{}
  \boldsymbol{\Theta}=-(\mathbf{Y}_E-\mathbf{\hat A}\mathbf{S})(\mathbf{C}_A\mathbf{S})^H
  (\mathbf{C}_A\mathbf{S}(\mathbf{C}_A\mathbf{S})^H)^{-1}.
\end{equation}
Then, the problem of \eqref{eq:P1} reduces to
\begin{equation}\label{eq:P2}
  \min_{\mathbf{S}\in \mathbb{S}_N^{N_A\times K_2}}\|(\mathbf{Y}_E-\mathbf{\hat A}\mathbf{S})(\mathbf{I}_{K_2}-\mathbf{P}_{\mathbf{C}_A\mathbf{S}})\|^2
\end{equation}
where $\mathbf{P}_{\mathbf{C}_A\mathbf{S}}=(\mathbf{C}_A\mathbf{S})^H
  (\mathbf{C}_A\mathbf{S}(\mathbf{C}_A\mathbf{S})^H)^{-1}\mathbf{C}_A\mathbf{S}$. The problem of \eqref{eq:P2} is more complex than \eqref{eq:P0} due to higher order of the cost function in terms of $\mathbf{S}$. A performance analysis of \eqref{eq:P2} can be done in a similar way as for \eqref{eq:P0} but is omitted.

\end{appendices}

\bibliographystyle{IEEEtran}
\bibliography{References}

\end{document}